\renewcommand\thmcontinues[1]{Continued}
\pgfplotsset{compat=1.18}
\newtheorem{thm}{Theorem}
\newtheorem{coro}{Corollary}
\newtheorem{lem}{Lemma}
\newtheorem{prop}{Proposition}
\theoremstyle{definition}
\newtheorem{assumption}{Assumption}
\newtheorem*{assumption*}{Assumption}
\numberwithin{equation}{section}
\theoremstyle{definition}
\newtheorem{remark_tmp}{Remark}[section]
\renewcommand{\arraystretch}{.6}
\newcommand\independent{\protect\mathpalette{\protect\independenT}{\perp}}
\def\independenT#1#2{\mathrel{\rlap{$#1#2$}\mkern2mu{#1#2}}}
\DeclareMathOperator*{\argmin}{arg\,min}
\DeclareMathOperator*{\sgn}{sgn}
\renewcommand{\P}{\mathbb{P}}
\newcommand{\E}{\mathbb{E}}
\newcommand{\V}{\mathbb{V}}
\newcommand{\R}{\mathbb{R}}
\newcommand{\Z}{\mathbb{Z}}
\newcommand{\1}{\mathbbm{1}}
\def\b{\boldsymbol}
\newcommand{\0}{\mathbf{0}}
\newcommand{\ba}{\mathbf{a}}
\newcommand{\bb}{\mathbf{b}}
\newcommand{\bc}{\mathbf{c}}
\newcommand{\bs}{\mathbf{s}}
\newcommand{\bt}{\mathbf{t}}
\newcommand{\bu}{\mathbf{u}}
\newcommand{\bx}{\mathbf{x}}
\newcommand{\bz}{\mathbf{z}}
\newcommand{\bw}{\mathbf{w}}
\newcommand{\bG}{\mathbf{G}}
\newcommand{\bH}{\mathbf{H}}
\newcommand{\bI}{\mathbf{I}}
\newcommand{\bL}{\mathbf{L}}
\newcommand{\bU}{\mathbf{U}}
\newcommand{\bV}{\mathbf{V}}
\newcommand{\bW}{\mathbf{W}}
\newcommand{\bbeta}{\boldsymbol{\beta}}
\newcommand{\bmu}{\boldsymbol{\mu}}
\newcommand{\bnu}{\boldsymbol{\nu}}
\newcommand{\bvarphi}{\boldsymbol{\varphi}}
\newcommand{\bpsi}{\boldsymbol{\psi}}
\newcommand{\btheta}{\boldsymbol{\theta}}
\newcommand{\bomega}{\boldsymbol{\omega}}
\newcommand{\bxi}{\boldsymbol{\xi}}
\newcommand{\bell}{\boldsymbol{\ell}}
\newcommand{\bDelta}{\boldsymbol{\Delta}}
\newcommand{\bGamma}{\boldsymbol{\Gamma}}
\newcommand{\bSigma}{\boldsymbol{\Sigma}}
\newcommand{\bXi}{\boldsymbol{\Xi}}
\newcommand{\cN}{\mathcal{N}}
\newcommand{\cW}{\mathcal{W}}
\newcommand{\smooth}{\mathtt{S}}
\begin{document}

\hypersetup{pageanchor=false}
\title{\vspace{-0.0in} Robust Inference for Convex Pairwise Difference Estimators\thanks{This paper was prepared for the Econometric Theory Lecture delivered at the $2025$ International Symposium on Econometric Theory and Applications (SETA), University of Macau (China), June 1--3, 2025. It was also presented at the Econometrics Journal Lecture of the $2024$ $(\text{EC})^2$ Conference (Amsterdam), and the $2025$ Conference in Honor of Bo Honor{\'e}'s $65$th Birthday (Princeton University). We thank the participants at these conferences for their feedback. Cattaneo gratefully acknowledges financial support from the National Science Foundation through grants SES-1947805, DMS-2210561, and SES-2241575. Jansson gratefully acknowledges financial support from the National Science Foundation through grant SES-1947662 and from the Aarhus Center for Econometrics (ACE) funded by the Danish National Research Foundation grant number DNRF186. Nagasawa gratefully acknowledges financial support from the British Academy through grant SRG24$\backslash$241614.}
\bigskip }

\author{Matias D. Cattaneo\thanks{Department of Operations Research and Financial Engineering, Princeton University.} \and
	Michael Jansson\thanks{Department of Economics, UC Berkeley and ACE.} \and
	Kenichi Nagasawa\thanks{Department of Economics, University of Warwick.}}
\maketitle

\begin{abstract}
    This paper develops distribution theory and bootstrap-based inference methods for a broad class of convex pairwise difference estimators. These estimators minimize a kernel-weighted convex-in-parameter function over observation pairs with similar covariates, where the similarity is governed by a localization (bandwidth) parameter. While classical results establish asymptotic normality under restrictive bandwidth conditions, we show that valid Gaussian and bootstrap-based inference remains possible under substantially weaker assumptions. First, we extend the theory of small bandwidth asymptotics to convex pairwise difference estimation settings, deriving robust Gaussian approximations even when a smaller than standard bandwidth is used. Second, we employ a debiasing procedure based on generalized jackknifing to enable inference with larger bandwidths, while preserving convexity of the objective function. Third, we construct a novel bootstrap method that adjusts for bandwidth-induced variance distortions, yielding valid inference across a wide range of bandwidth choices. Our proposed inference method enjoys demonstrably greater robustness, while retaining the practical appeal of convex pairwise difference estimators.
\end{abstract}

\textit{Keywords:} small bandwidth asymptotics, generalized jackknife, bootstrap, U-process, pairwise comparisons, robust distribution theory.

\thispagestyle{empty}


\thispagestyle{empty}
\clearpage

\hypersetup{pageanchor=true}
\setcounter{page}{1}
\pagestyle{plain}

\section{Introduction}

Suppose $\bz_1,\dots,\bz_n$ is a random sample from the distribution of a random vector $\bz$. This paper studies the large-sample properties of the following \textit{convex} pairwise difference estimator:
\begin{equation}\label{eq: Pairwise Difference Estimator}
    \widehat{\btheta}_n \in \argmin_{\btheta\in\Theta} \binom{n}{2}^{-1} \sum_{i<j} m(\bz_i,\bz_j;\btheta) K_{h_n}(\bw_i-\bw_j), \qquad K_h(\bu)=\frac{1}{h^d}K\left(\frac{\bu}{h}\right),
\end{equation}
where $\Theta \subseteq \R^k$ is a parameter space, $\sum_{i<j}$ denotes $\sum_{j=2}^n \sum_{i=1}^{j-1}$, $(\bz,\bar{\bz}) \mapsto m(\bz,\bar{\bz};\btheta)$ is a permutation symmetric function, $K$ is a symmetric, non-negative kernel, $h_n$ is a positive bandwidth (or localization) parameter sequence, $\bw$ is a continuously distributed $d$-dimensional subvector of $\bz$, and where $\btheta \mapsto m(\bz_i, \bz_j; \btheta)$ is a \textit{convex} function. Pairwise difference estimation, which relies on local comparisons between observation pairs, has been used to address heterogeneity in nonlinear models. See \citet{Powell_1994_Handbook}, \citet{Honore-Powell_2005_Festschrift}, and \citet*{AradillasLopez-Honore-Powell_2007_IER} for overviews, and Section \ref{Section: Motivating Examples} for three motivating examples.

In contrast to classical extremum estimators, $\widehat{\btheta}_n$ is a local $M$-estimator that employs observation pairs $(i, j)$ for which $\bw_i$ and $\bw_j$ are similar. The bandwidth $h_n$ governs the degree of similarity: When $h_n \to 0$ (as $n \to \infty$), the estimator increasingly focuses on nearly-identical-in-$\bw$ pairs. In turn, focusing on such pairs is natural in settings where identification can be based on the condition $\bw_i \approx \bw_j$ (combined with smoothness assumptions). The localization introduces a familiar trade-off for estimation and inference: A smaller $h_n$ reduces bias from dissimilarity between $\bw_i$ and $\bw_j$, but increases variance due to fewer available usable pairs. As a consequence, the large-sample behavior of $\widehat{\btheta}_n$ depends critically on a delicate bias-variance trade-off determined by $h_n$. This paper develops novel inference methods for convex pairwise difference estimators that are demonstrably more robust to bandwidth choice than existing methods.

Under regularity conditions and assuming that
\begin{equation}\label{eq: Bandwidth Conditions}
    nh_n^d \to \infty \qquad\text{and}\qquad nh_n^4 \to 0,
\end{equation}
the pairwise difference estimator $\widehat{\btheta}_n$ is known to be asymptotically linear:
\begin{equation}\label{eq: Asymptotic Linearity}
    \sqrt{n} (\widehat{\btheta}_n - \btheta_0) = \frac{1}{\sqrt{n}} \sum_{i=1}^n \bpsi_0(\bz_i) + o_\P(1) \rightsquigarrow \mathsf{N}(\0, \E[\bpsi_0(\bz)\bpsi_0(\bz)']),
\end{equation}
where $\btheta_0$ is the estimand, $\bpsi_0(\cdot)$ is an influence function (whose exact form is given below), and where $\rightsquigarrow$ denotes weak convergence. Moreover, the nonparametric bootstrap approximation to the distribution in \eqref{eq: Asymptotic Linearity} is consistent in the sense that
\begin{equation}\label{eq: Asymptotic Linearity - Bootstrap}
    \sqrt{n} (\widehat{\btheta}_n^* - \widehat{\btheta}_n) \rightsquigarrow_\P \mathsf{N}(\0, \E[\bpsi_0(\bz)\bpsi_0(\bz)']),
\end{equation}
where $\widehat{\btheta}_n^*$ is the bootstrap analogue of $\widehat{\btheta}_n$ and where $\rightsquigarrow_\P$ denotes weak convergence in probability.


The main results of this paper generalize \eqref{eq: Bandwidth Conditions}-\eqref{eq: Asymptotic Linearity - Bootstrap} by combining three ideas:
\begin{enumerate}
    \item \textit{Small Bandwidth Asymptotics}. Utilizing the framework introduced by \cite*{Cattaneo-Crump-Jansson_2014a_ET}, we obtain a Gaussian distributional approximation for the pairwise difference estimator without imposing the condition $nh_n^d \to \infty$, thereby allowing for higher levels of localization. This generalized distributional approximation shows in particular that, while the localization restriction $nh_n^d \to \infty$ is necessary for establishing asymptotic linearity, a valid Gaussian approximation can be obtained under the substantially weaker condition $n^2h_n^d\to\infty$, albeit with a convergence rate (and approximate variance) that depends explicitly on the level of localization used.
    
    \item \textit{Debiasing}. Following \cite{Honore-Powell_2005_Festschrift}, we debias the pairwise difference estimator using the method of \textit{generalized jackknifing} introduced by \cite{Schucany-Sommers_1977_JASA}. Doing so allows for (larger) bandwidths that violate the bias condition $nh_n^4 \to 0$. The same goal could be achieved by replacing the (second-order) kernel $K$ with a higher-order kernel, but a higher-order kernel annihilates the convexity of the objective function because higher-order kernels take negative values. In contrast, generalized jackknifing retains the convexity of objective functions, which in turn is attractive for both theoretical (weaker regularity conditions) and practical (faster computation) reasons.

    \item \textit{Bootstrapping}. Building on insights from \citet*{Cattaneo-Crump-Jansson_2014b_ET}, we develop a valid bootstrap-based distributional approximation for the debiased pairwise difference estimator by rescaling the localization parameter. The nonparametric bootstrap distributional approximation exhibits a mismatch in its asymptotic variance under small bandwidth asymptotics. The mismatch is characterized by a known multiplicative factor involving the localization parameter $h_n$. As a result, bootstrapping the (debiased) pairwise difference estimator with a different localization parameter (namely, $3^{1/d}h_n$ rather than $h_n$) leads to a valid bootstrap-based inference procedure also under small bandwidth asymptotics.
\end{enumerate}

In combination, these three ideas enable us to offer a novel resampling-based inference method for (convex) pairwise difference estimators that is demonstrably more robust to the choice of the localization parameter $h_n$ than methods based on \eqref{eq: Bandwidth Conditions}-\eqref{eq: Asymptotic Linearity - Bootstrap}.

Our theoretical work is carefully developed to retain and leverage convexity of the objective function defining the pairwise difference estimator. This feature not only allows for fast implementation of the estimator and resampling-based methods, but also enables us to proceed under relatively weak conditions when obtaining theoretical results. When developing our theoretical results, we rely heavily on the foundational work of \cite{Hjort-Pollard_1993} and \cite{Pollard_1991_ET}, which we apply to the case of $U$-processes.

This paper is connected to several strands of the literature. Contributions to the pairwise difference estimation literature include \cite*{Ahn-Ichimura-Powell-Ruud_2018_JBES}, \cite{Ahn-Powell_1993_JoE}, \cite{AradillasLopez_2012_JoE}, \cite{Blundell-Powell_2004_REStud}, \cite{Hong-Shum_2010_REStud}, \cite{Honore_1992_ECMA}, \cite*{Honore-Kyriazidou-Udry_1997_JoE}, \cite{Honore-Powell_1994_JoE}, \cite{Jochmans_2013_EctJ}, \cite{Kyriazidou_1997_ECMA}, and \cite{Powell_2001_Festschrift}. The theoretical and practical features of small bandwidth asymptotics, and their connection with resampling methods for inference, are discussed in \citet*{Cattaneo-Crump-Jansson_2010_JASA,Cattaneo-Crump-Jansson_2014a_ET,Cattaneo-Crump-Jansson_2014b_ET}, \citet*{Cattaneo-Farrell-Jansson-Masini_2025_JoE}, \cite{Cattaneo-Jansson_2018_ECMA,Cattaneo-Jansson_2022_ET}, \cite*{Cattaneo-Jansson-Newey_2018_ET}, \cite{Matsushita-Otsu_2021_Biometrika}, and references therein. The generalized jackknife has been successfully used for debiasing in density weighted average derivative estimation \citep*{Powell-Stock-Stoker_1989_ECMA}, asymptotically linear pairwise difference estimation \citep{Honore-Powell_2005_Festschrift}, nonlinear semiparametric estimation \citep*{Cattaneo-Crump-Jansson_2013_JASA}, monotone estimation \citep*{Cattaneo-Jansson-Nagasawa_2024_AoS}, and random forest estimation \citep*{Cattaneo-Klusowski-Underwood_2026_JRSSB}, among other settings. \cite{Shao-Tu_2012_Book} give a textbook introduction to jackknifing, bootstrapping, and other resampling methods. 

The rest of the paper proceeds as follows. Section \ref{Section: Motivating Examples} introduces the three examples that are used throughout the paper to motivate our work and to illustrate the verification of the high-level assumptions imposed. Section \ref{Section: Distributional Approximation and Bootstrap Inference} presents our main results. The proofs of these results are given in Section \ref{Section: Proofs and Other Technical Results}. Section \ref{Section: Sufficient Conditions for Motivating Examples} revisits the three motivating examples and gives primitive conditions under which these examples are covered by our general theory. Simulation evidence is reported in Section \ref{Section: Simulation Evidence}. Section \ref{Section: Conclusion} gives final remarks.

\section{Motivating Examples}\label{Section: Motivating Examples}

We use three examples to motivate and illustrate our work. The first example involves an estimator that can be written in closed form (because it has a quadratic-in-$\btheta$ function $m(\bz_i, \bz_j; \btheta)$), while the other two examples do not. The second example has a smooth-in-$\btheta$ function $m(\bz_i, \bz_j; \btheta)$, while the third example does not. All three examples have convex-in-$\btheta$ functions $m(\bz_i, \bz_j; \btheta)$ and employ the following notation: $\bz_i = (y_i, \bx_i', \bw_i')'$, with $y_i$ a scalar outcome variable, $\bx_i$ a $k$-dimensional covariate, and $\bw_i$ a $d$-dimensional covariate. For more details on the examples, see \cite{Powell_1994_Handbook}, \cite{Honore-Powell_2005_Festschrift}, and \citet{AradillasLopez-Honore-Powell_2007_IER}.

\subsection{Partially Linear Regression Model}

The partially linear regression model is of the form
\begin{equation*}
    y_i =\bx_i'\btheta_0 + \gamma_0(\bw_i) + \varepsilon_i,
\end{equation*}
where $\btheta_0$ is the parameter of interest, $\gamma_0(\cdot)$ is an unknown function, and where $\E[\varepsilon_i|\bx_i,\bw_i]=0$. Defining $\dot{y}_{i,j}=y_i-y_j$ and $\dot{\bx}_{i,j}=\bx_i-\bx_j$, a pairwise difference estimator of $\btheta_0$ can be based on
\begin{equation*}
    m(\bz_i,\bz_j;\btheta) = m_{\mathtt{PLR}}(\bz_i,\bz_j;\btheta) = \frac{1}{2}(\dot{y}_{i,j}-\dot{\bx}_{i,j}'\btheta )^2.
\end{equation*}
Setting $\Theta=\R^k$, the minimization problem in \eqref{eq: Pairwise Difference Estimator} admits a closed form solution (provided that a non-negative kernel function is used), namely
\begin{equation*}
    \widehat{\btheta}_n = \left( \sum_{i<j} \dot{\bx}_{i,j}\dot{\bx}_{i,j}'K_{h_n}(\bw_i-\bw_j) \right)^{-1} \sum_{i<j} \dot{\bx}_{i,j}\dot{y}_{i,j}K_{h_n}(\bw_i-\bw_j).
\end{equation*}

\subsection{Partially Linear Logit Model}

The partially linear logit model studied here is of the form
\begin{equation*}
    y_i = \1 \{\bx_i'\btheta_0 + \gamma_0(\bw_i) + \varepsilon_i \geq 0 \},
\end{equation*}
where $\btheta_0$ is the parameter of interest, $\gamma_0(\cdot)$ is an unknown function, and where
\begin{equation*}
    \P\big[\varepsilon_i \leq u | \bx_i,\bw_i \big] = \Lambda(u), \qquad \Lambda(u) = \frac{\exp(u)}{1+\exp(u)}.
\end{equation*}
The parameter $\btheta_0$ can be estimated using a pairwise difference estimator with $\Theta=\R^k$ and
\begin{equation*}
	m(\bz_i,\bz_j;\btheta) = m_{\mathtt{PLL}}(\bz_i,\bz_j;\btheta) = -\1\{\dot{y}_{i,j}\neq 0\} \left(y_i\ln\Lambda(\dot{\bx}_{i,j}'\btheta) + y_j\ln\Lambda(-\dot{\bx}_{i,j}'\btheta)\right).
\end{equation*}
The minimization problem in \eqref{eq: Pairwise Difference Estimator} does not admit a closed form solution, but it is convex (provided that a non-negative kernel function is used) because $u\mapsto -\ln\Lambda(u)$ is.

\subsection{Partially Linear Tobit Model}

The partially linear censored regression model studied here is of the form
\begin{equation*}
    y_i =\max\{\bx_i'\btheta_0+\gamma_0(\bw_i)+\varepsilon_i,0\},
\end{equation*}
where $\btheta_0$ is the parameter of interest, $\gamma_0(\cdot)$ is an unknown function, $\bx_i\independent \varepsilon_i|\bw_i$, and the conditional distribution of $\varepsilon_i$ given $\bw_i$ admits a Lebesgue density. A pairwise difference estimator of $\btheta_0$ can be obtained by setting $\Theta=\R^k$ and employing

\begin{equation*}
    m(\bz_i,\bz_j;\btheta) = m_{\mathtt{PLT}}(\bz_i,\bz_j;\btheta) = \tilde{m}_{\mathtt{PLT}}(\bz_i,\bz_j;\btheta)-\tilde{m}_{\mathtt{PLT}}(\bz_i,\bz_j;\b0),
\end{equation*}
where
\begin{equation*}
    \tilde{m}_{\mathtt{PLT}}(\bz_i,\bz_j;\btheta) = 
    \begin{cases}
        |y_i| - \big( \dot{\bx}_{i,j}'\btheta + y_j \big) \sgn(y_i) & \text{if } \dot{\bx}_{i,j}'\btheta \leq -y_j \\
        \big|\dot{y}_{i,j} - \dot{\bx}_{i,j}'\btheta \big| & \text{if } -y_j < \dot{\bx}_{i,j}'\btheta < y_i \\
        |y_j| + \big(\dot{\bx}_{i,j}'\btheta-y_i \big) \sgn(y_j) & \text{if } y_i \leq \dot{\bx}_{i,j}'\btheta 
    \end{cases}.
\end{equation*}
Because $\tilde{m}_{\mathtt{PLT}}(\bz_i,\bz_j;\b0)$ does not depend on $\btheta$, its presence in $m_{\mathtt{PLT}}(\bz_i,\bz_j;\btheta)$ does not affect the minimization problem defining the estimator. Nevertheless, it is theoretically attractive to work with $m_{\mathtt{PLT}}$ rather than $\tilde{m}_{\mathtt{PLT}}$, as doing so allows for weaker regularity conditions for the existence of the expectation of the objective function.

For future reference, we note that $m_{\mathtt{PLT}}$ admits the alternative representation
\begin{equation*}
    m_{\mathtt{PLT}}(\bz_i,\bz_j;\btheta) =
    \begin{cases}
        \big|\dot{y}_{i,j}-\dot{\bx}_{i,j}'\btheta\big|-\big|\dot{y}_{i,j}\big| &\text{ if } y_i>0,y_j>0 \\
        \max\{ y_i-\dot{\bx}_{i,j}'\btheta, 0\}-y_i &\text{ if } y_i>0,y_j=0 \\
        \max\{ y_j+\dot{\bx}_{i,j}'\btheta, 0\}-y_j &\text{ if } y_i=0,y_j>0 \\
        0 &\text{ if } y_i=0,y_j=0
    \end{cases}.
\end{equation*}
The function $\btheta\mapsto m_{\mathtt{PLT}}(\bz_1,\bz_2;\btheta)$ is convex and therefore so is the minimization problem in \eqref{eq: Pairwise Difference Estimator} (provided that a non-negative kernel function is used).

\section{Distributional Approximation and Bootstrap Inference}\label{Section: Distributional Approximation and Bootstrap Inference}

As is standard in the literature, we generalize \eqref{eq: Pairwise Difference Estimator} slightly and define our estimator $\widehat{\btheta}_n=\widehat{\btheta}_n(h_n)$ to be any approximate minimizer of $\widehat{M}_n(\btheta;h_n)$, where
\begin{equation*}
     \widehat{M}_n(\btheta;h) = \binom{n}{2}^{-1}\sum_{i<j} m(\bz_i,\bz_j;\btheta) K_h(\bw_i-\bw_j).
\end{equation*}
To be specific, we require
\begin{equation*}
    \widehat{M}_n(\widehat{\btheta}_n(h);h) \leq \inf_{\btheta\in\Theta} \widehat{M}_n(\btheta;h) + o_{\P}( n^{-1} ).
\end{equation*}

The objective function $\widehat{M}_n$ is a sample counterpart of the function $M$ given by
\begin{equation*}
    M(\btheta;h) = \E\big[\widehat{M}_n(\btheta;h)\big]
                 = \E\big[ m(\bz_1,\bz_2;\btheta)  K_{h}(\bw_1-\bw_2)\big].
\end{equation*}
Under regularity conditions, this function approximates, as $h\downarrow 0$, a function $M_0$, which (does not depend on $K$ and) admits a unique minimizer, namely the parameter of interest $\btheta_0$.

For the purposes of analyzing $\widehat{\btheta}_n$ it is convenient to define $\btheta_n = \btheta(h_n)$, where
\begin{equation*}
    \btheta(h) \in \argmin_{\btheta\in\Theta} M(\btheta;h)
\end{equation*}
is interpretable as a (fixed-$h$) ``pseudo'' parameter. With the help of $\btheta_n$ we can decompose the estimation error $\widehat{\btheta}_n-\btheta_0$ into a (non-stochastic) ``bias'' component $\btheta_n-\btheta_0$ and a ``noise'' component $\widehat{\btheta}_n-\btheta_n$. Each component can be analyzed separately and in both cases the analysis will leverage convexity.

\subsection{Regularity Conditions}

The following assumption guarantees, among other things, that $\btheta_n$ is well defined for large $n$ and that the bias component $\btheta_n-\btheta_0$ vanishes asymptotically; for details, see Lemma \ref{Lemma: Existence and Convergence of theta(h)} of Section \ref{Section: A Useful Lemma}.

\begin{assumption}\label{Assumption: Convergence of M}
    \begin{enumerate}[(i)]
        \item \label{Assumption: Convergence of M - kernel function} The kernel function $K$ is a symmetric, bounded probability density.
        
        \item \label{Assumption: Convergence of M - convexity} $\Theta\subseteq\R^k$ is convex, $(\bz,\bar{\bz}) \mapsto m(\bz,\bar{\bz};\btheta)$ is permutation symmetric, and $\btheta \mapsto m(\bz_1,\bz_2;\btheta)$ is convex with probability one. 

        \item \label{Assumption: Convergence of M - density of w} The distribution of $\bw$ admits a Lebesgue density $f_{\bw}$, which is bounded and continuous on its support $\cW$.

        \item \label{Assumption: Convergence of M - well defined Mn} For each  $\btheta\in\Theta$,
        \begin{equation*}
            \E\left[\sup_{\bw_2\in\cW}|\E[m(\bz_1,\bz_2;\btheta)|\bw_1,\bw_2]|f_{\bw}(\bw_2)\right]<\infty
        \end{equation*}
        and (with probability one)
        \begin{equation*}
            \lim_{\bu\to \0} \E[m(\bz_1,\bz_2;\btheta)|\bw_1=\bw,\bw_2=\bw+\bu] = \E[m(\bz_1,\bz_2;\btheta)|\bw_1=\bw,\bw_2=\bw].
        \end{equation*}
        
        \item \label{Assumption: Convergence of M - well behaved M0} On $\Theta$, the function $M_0$ given by
        \begin{equation*}
            M_0(\btheta) = \int_{\cW} \E[m(\bz_1,\bz_2;\btheta)|\bw_1=\bw,\bw_2=\bw]f_{\bw}(\bw)^2d\bw
        \end{equation*}
        is uniquely minimized at an interior point $\btheta_0$.
        
    \end{enumerate}
\end{assumption}

The next assumption enables us to analyze the asymptotic properties of the noise component $\widehat{\btheta}_n-\btheta_n$. To accommodate examples (such as the partially linear Tobit model) where $\btheta \mapsto m(\bz_i,\bz_j;\btheta)$ is not fully differentiable, we assume the existence of derivative-like functions $\bs(\bz_i,\bz_j;\btheta)\in\R^k$ and $\bH(\bw_i,\bw_j;\btheta,\bt)\in\R^{k\times k}$ such that, for any direction $\bt \in \R^k$, the (remainder) terms
\begin{equation*}
    r_\bt(\btheta,\tau)=\frac{m(\bz_1,\bz_2;\btheta+\bt \tau)-m(\bz_1,\bz_2;\btheta)}{\tau} - \bs(\bz_1,\bz_2;\btheta)' \bt
\end{equation*}
and
\begin{equation*}
    R_\bt(\btheta,\tau)=\frac{\E[r_\bt(\btheta,\tau)|\bw_1,\bw_2]}{\tau} - \frac{1}{2}\bt'\bH(\bw_1,\bw_2;\btheta,\bt)\bt
\end{equation*}
are suitably small for $\btheta$ near $\btheta_0$, $\tau>0$ near zero, and $\bw_1 \approx \bw_2$. As further discussed below, functions $\bs$ and $\bH$ satisfying the following assumption exist (and are relatively easy to find) in each of our motivating examples.

\begin{assumption}\label{Assumption: Asymptotic Distribution}
    \begin{enumerate}[(i)]
        \item \label{Assumption: Asymptotic Distribution - differentiability} For each $\bt \in\R^k$, there is some $\delta>0$ such that
        \begin{align*}
            & \E\left[\sup_{ \tau\in (0,\delta),\|\btheta-\btheta_0\|<\delta, \bw_2 \in\cW}
                      \big|\E[ r_\bt(\btheta,\tau) | \bz_1,\bw_2 ] \big|f_{\bw}(\bw_2)^2 \right]< \infty,\\
            & \E\left[\sup_{ \tau\in (0,\delta),\|\btheta-\btheta_0\|<\delta, \bw_2 \in\cW}
                      \E[r_\bt(\btheta,\tau)^2 | \bw_1,\bw_2 ] f_{\bw}(\bw_2) \right] < \infty,\\
            &\E\left[\sup_{\tau\in (0,\delta),\|\btheta-\btheta_0\|<\delta,  \bw_2 \in\cW}
                     \left|R_\bt(\btheta,\tau)\right| f_{\bw}(\bw_2) \right]<\infty,
        \end{align*}
        and (with probability one)
        \begin{align*}
            & \lim_{\tau\downarrow0,(\btheta,\bu)\to(\btheta_0,\0)}
              \E[ r_\bt(\btheta,\tau) | \bz_1=\bz,\bw_2 =\bw+\bu ] = 0, \\
            & \lim_{\tau\downarrow0,(\btheta,\bu)\to(\btheta_0,\0)}
              \E[ r_\bt(\btheta,\tau)^2 | \bw_1=\bw,\bw_2=\bw+\bu ] = 0, \\
            & \lim_{\tau\downarrow0,(\btheta,\bu)\to(\btheta_0,\0)}
              \E[ R_\bt(\btheta,\tau) | \bw_1=\bw,\bw_2 =\bw+\bu ] = 0.
        \end{align*}

        \item \label{Assumption: Asymptotic Distribution - moment bounds} There is some $\delta>0$ and some function $b$ with
        \begin{equation*}
            \sup_{\|\btheta-\btheta_0\|<\delta} \|\bs(\bz_1,\bz_2;\btheta)\|\leq b(\bz_1)b(\bz_2),
        \end{equation*}
        such that
        \begin{equation*}
            \E[b(\bz)^4] + \sup_{\bw\in\cW}\E[b(\bz)^4|\bw]f_{\bw}(\bw)<\infty
        \end{equation*}
        and
        \begin{equation*}
            \E\left[\sup_{\|\btheta-\btheta_0\|<\delta, \bw_2\in\cW}                       \|\bH(\bw_1,\bw_2;\btheta,\bt)\| f_{\bw}(\bw_2) \right] < \infty \qquad \text{for each } \bt \in \R^k.
        \end{equation*}
        
        \item \label{Assumption: Asymptotic Distribution - variance ingredients} There exist functions $\bG_0,\bxi_0$, and $\bXi_0$ such that, for each $\bt\in\R^k$ (and with probability one),
        \begin{equation*}
            \lim_{(\btheta,\bu)\to(\btheta_0,\0)} \bH(\bw,\bw+\bu;\btheta,\bt) f_{\bw}(\bw) = \bG_0(\bw),
        \end{equation*}
        \begin{equation*}
            \lim_{(\btheta,\bu)\to(\btheta_0,\0)} -2\E[\bs(\bz_1,\bz_2;\btheta)|\bz_1=\bz,\bw_2=\bw+\bu ] f_{\bw}(\bw) = \bxi_0(\bz),
        \end{equation*}
        and
        \begin{equation*}
            \lim_{(\btheta,\bar{\btheta},\bu)\to (\btheta_0,\btheta_0,\0) } \E[\bs(\bz_1,\bz_2;\btheta)\bs(\bz_1,\bz_2;\bar{\btheta})'|\bw_1=\bw,\bw_2=\bw+\bu ] f_{\bw}(\bw)= \bXi_0(\bw).
        \end{equation*}

        \item \label{Assumption: Asymptotic Distribution - nonsingularity} $\bGamma_0 = \E[\bG_0(\bw)], \bSigma_0 = \E[\bxi_0(\bz)\bxi_0(\bz)']$, and $\E[\bXi_0(\bw)]$ are positive definite.
    \end{enumerate}
\end{assumption}

\subsection{Small Bandwidth Asymptotics}

Defining
\begin{equation*}
    \bV_n = \bV_n(h_n)=\bGamma_0^{-1}\left[ n^{-1}\bSigma_0 + \binom{n}{2}^{-1}h_n^{-d}\bDelta_0(K)\right]\bGamma_0^{-1}, \qquad \bDelta_0(K) = \E[ \bXi_0(\bw)] \int_{\R^d} K^2(\bu)d\bu,
\end{equation*}
and letting $\Phi_k$ denote the distribution function of a $k$-dimensional standard Gaussian random vector, we have the following result.

\begin{thm}\label{Theorem: Asymptotic Distribution}
    Suppose Assumptions \ref{Assumption: Convergence of M} and \ref{Assumption: Asymptotic Distribution} hold. If $n^2h_n^d\to\infty$ and if $h_n\to 0$, then
    \begin{equation*}
        \sup_{\bt\in\R^k} \left| \P\left[ \bV_n^{-1/2}(\widehat{\btheta}_n-\btheta_n ) \leq \bt \right] - \Phi_k(\bt) \right| \to 0.
    \end{equation*}
\end{thm}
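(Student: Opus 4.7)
The plan is to apply the convex $M$-estimation machinery of \citet{Hjort-Pollard_1993} and \citet{Pollard_1991_ET} to a rescaled objective, combined with a central limit theorem for a kernel-weighted $U$-statistic. After the reparameterization $\btheta = \btheta_n + \bV_n^{1/2}\bt$, the random function $A_n(\bt) := \widehat{M}_n(\btheta_n + \bV_n^{1/2}\bt; h_n) - \widehat{M}_n(\btheta_n; h_n)$ is convex in $\bt$ by Assumption~\ref{Assumption: Convergence of M}(\ref{Assumption: Convergence of M - convexity}) and $K\geq 0$, and its approximate minimizer is $\hat{\bt}_n := \bV_n^{-1/2}(\widehat{\btheta}_n - \btheta_n)$. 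The goal reduces to showing $\hat{\bt}_n \rightsquigarrow \mathsf{N}(\0,\bI_k)$, and the convexity lemma reduces this in turn to establishing, pointwise in $\bt\in\R^k$, the quadratic expansion
\begin{equation*}
    A_n(\bt) = \bt'\bV_n^{1/2}\bS_n + \tfrac{1}{2}\bt'\bV_n^{1/2}\bGamma_0\bV_n^{1/2}\bt + o_\P\!\left(\|\bV_n^{1/2}\bt\|^2 + |\bt'\bV_n^{1/2}\bS_n|\right),
\end{equation*}
where $\bS_n := \binom{n}{2}^{-1}\sum_{i<j}\bs(\bz_i,\bz_j;\btheta_n)K_{h_n}(\bw_i - \bw_j)$, together with the CLT $\bV_n^{-1/2}\bGamma_0^{-1}\bS_n\rightsquigarrow \mathsf{N}(\0,\bI_k)$; jointly these yield $\hat{\bt}_n = -\bV_n^{-1/2}\bGamma_0^{-1}\bS_n + o_\P(1)$, which is the theorem.

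For the quadratic expansion, I would use the definition of $\bs$ in Assumption~\ref{Assumption: Asymptotic Distribution} to write $A_n(\bt)$ as $\bt'\bV_n^{1/2}\bS_n$ plus a second-order remainder generated by $r_{\bt}(\btheta_n,\tau)$ at $\tau = \|\bV_n^{1/2}\bt\|$; the integrability and continuity conditions in Assumption~\ref{Assumption: Asymptotic Distribution}(\ref{Assumption: Asymptotic Distribution - differentiability})--(\ref{Assumption: Asymptotic Distribution - moment bounds}), combined with standard $U$-process maximal inequalities, control both the mean and the stochastic fluctuation of that remainder. Because $\btheta_n$ is an interior minimizer of $M(\cdot;h_n)$, the first-order condition $\E[\bS_n] = \0$ holds, and the expected second-order piece equals $\tfrac{1}{2}\bt'\bV_n^{1/2}\E[\bH(\bw_1,\bw_2;\btheta_n,\bt)K_{h_n}(\bw_1-\bw_2)]\bV_n^{1/2}\bt + o(\|\bV_n^{1/2}\bt\|^2)$, which converges to $\tfrac{1}{2}\bt'\bV_n^{1/2}\bGamma_0\bV_n^{1/2}\bt$ by a kernel change-of-variables and Assumption~\ref{Assumption: Asymptotic Distribution}(\ref{Assumption: Asymptotic Distribution - variance ingredients}). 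For the CLT on $\bS_n$, I would apply the Hoeffding decomposition $\bS_n = \tfrac{2}{n}\sum_i \bar{\bs}_n(\bz_i) + \binom{n}{2}^{-1}\sum_{i<j}\psi_n(\bz_i,\bz_j)$, with $\bar{\bs}_n(\bz) := \E[\bs(\bz,\bz_2;\btheta_n)K_{h_n}(\bw-\bw_2)\mid\bz_1=\bz]$. Assumption~\ref{Assumption: Asymptotic Distribution}(\ref{Assumption: Asymptotic Distribution - variance ingredients}) identifies $2\bar{\bs}_n(\bz) \to \bxi_0(\bz)$ in $L^2$, so the H\'ajek projection admits the standard Lindeberg--L\'evy CLT contributing variance $n^{-1}\bSigma_0$; the completely degenerate part has variance asymptotic to $\binom{n}{2}^{-1}h_n^{-d}\bDelta_0(K)$, via a kernel change-of-variables on $\E[\psi_n\psi_n']$ together with the $\bXi_0$ limit, and satisfies a degenerate $U$-statistic CLT of Hall/de~Jong type. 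Summed, the two pieces have variance exactly $\bGamma_0\bV_n\bGamma_0$, which is the required normalization; joint Gaussianity of $\bV_n^{-1/2}\bGamma_0^{-1}\bS_n$ follows from a Cram\'er--Wold reduction and a Lyapunov fourth-moment bound, with moments of $\bs$ controlled by the envelope $b$ in Assumption~\ref{Assumption: Asymptotic Distribution}(\ref{Assumption: Asymptotic Distribution - moment bounds}).

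The main obstacle is this last CLT in the regime where $nh_n^d\not\to\infty$: there the H\'ajek projection is of smaller stochastic order than the degenerate component, so one cannot simply reduce to the classical $U$-statistic CLT but must instead verify joint Gaussianity in a regime where the two components may be of comparable order, with the lower bound $n^2h_n^d\to\infty$ being precisely what is needed to apply the Hall/de~Jong CLT and to verify the Lyapunov condition uniformly across bandwidth regimes. A secondary difficulty is controlling the stochastic remainder uniformly in $\tau\downarrow 0$ in non-smooth settings such as the partially linear Tobit model, where the directional objects $\bs$ and $\bH$ stand in for true derivatives; the integrability conditions on $r_\bt$ and $R_\bt$ in Assumption~\ref{Assumption: Asymptotic Distribution}(\ref{Assumption: Asymptotic Distribution - differentiability}) are calibrated precisely so that the associated $U$-process remainders are $o_\P$ of the leading quadratic.
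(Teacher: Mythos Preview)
Your strategy matches the paper's: reduce via the Hjort--Pollard convexity lemma to a pointwise quadratic expansion of the rescaled objective, then establish a CLT for the kernel $U$-statistic $\bS_n$ through its Hoeffding decomposition into H\'ajek projection and degenerate remainder. The one substantive difference is the CLT machinery. You propose separate Lindeberg--L\'evy and Hall/de~Jong arguments for the two pieces and then appeal to a ``Lyapunov fourth-moment bound'' for joint normality; as you yourself flag, this is awkward in the regime where the two components are of comparable order, and a Lyapunov condition in the usual sense does not directly apply to a mixture of an i.i.d.\ sum and a degenerate $U$-statistic. The paper instead writes, after a Cram\'er--Wold reduction, the scalar $\bmu_1'\sqrt{n}\,\widetilde{\bL}_n+\bmu_2'\sqrt{\binom{n}{2}h_n^d}\,\widetilde{\bW}_n$ as a single martingale-difference array $g_{i,n}$ with respect to the filtration generated by $\bz_1,\dots,\bz_i$ and applies the \cite{Heyde-Brown_1970} martingale CLT once, verifying the conditional-variance convergence and the fourth-moment condition directly. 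This delivers joint Gaussianity of $(\sqrt{n}\,\widetilde{\bL}_n,\sqrt{\binom{n}{2}h_n^d}\,\widetilde{\bW}_n)$ with block-diagonal limiting covariance in a single stroke, uniformly over all regimes with $n^2h_n^d\to\infty$, and cleanly resolves the obstacle you identify. Two further cosmetic differences: the paper rescales by the scalar rate $\rho_n=\sqrt{\min(n,\binom{n}{2}h_n^d)}$ rather than by the matrix $\bV_n^{1/2}$, which keeps the limiting quadratic $\tfrac12\bt'\bGamma_0\bt$ nondegenerate without further normalization; and it controls the $r_\bt$-remainder by a direct Hoeffding variance bound (on $\E[\E[r_\bt K_h|\bz_1]^2]$ and $h^d\E[r_\bt^2K_h^2]$) rather than by $U$-process maximal inequalities.
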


Under the assumptions of Theorem \ref{Theorem: Asymptotic Distribution}, the convergence rate of $\widehat{\btheta}_n-\btheta_n$ is given by the magnitude of $\bV_n^{-1/2}$, namely
\begin{equation*}
    \rho_n = \sqrt{\min\left(n,\binom{n}{2}h_n^d\right)}.
\end{equation*}
Provided that the bias is ``small'' in the sense that $\rho_n\|\btheta_n-\btheta_0\|\to0$, Theorem \ref{Theorem: Asymptotic Distribution} therefore encompasses the following three distinct large-sample regimes:
\begin{itemize}
    \item \textit{Asymptotic Linearity}: If $nh_n^d\to\infty$, then $\sqrt{n}(\widehat{\btheta}_n-\btheta_0)$ satisfies \eqref{eq: Asymptotic Linearity} with $\bpsi_0 = \bGamma_0^{-1}\bxi_0$. In particular, $\sqrt{n}(\widehat{\btheta}_n-\btheta_0)$ converges in law to a centered Gaussian distribution with variance
    \begin{equation*}
        \lim_{n\to\infty} n \bV_n(h_n) = \bGamma_0^{-1} \bSigma_0 \bGamma_0^{-1}.
    \end{equation*}
    
    \item \textit{Root-$n$ Consistency without Asymptotic Linearity}: If $nh_n^d\to 2c \in(0,\infty)$, then $\widehat{\btheta}_n$ is not asymptotically linear, but it is $\sqrt{n}$-consistent, $\sqrt{n}(\widehat{\btheta}_n-\btheta_0)$ converging in law to a centered Gaussian distribution with variance
    \begin{equation*}
        \lim_{n\to\infty} n \bV_n(h_n) = \bGamma_0^{-1}\left[ \bSigma_0 + \frac{1}{c} \bDelta_0(K)\right]\bGamma_0^{-1}.
    \end{equation*}

    \item \textit{Slower than Root-$n$ Consistency}: If $nh_n^d\to 0$ (but $n^2h_n^d\to\infty$), then $\widehat{\btheta}_n$ is neither asymptotically linear nor $\sqrt{n}$-consistent, but $\sqrt{n^2h_n^d/2}(\widehat{\btheta}_n-\btheta_0)$ converges in law to a centered Gaussian distribution with variance
    \begin{equation*}
        \lim_{n\to\infty} \binom{n}{2}h_n^d \bV_n(h_n) = \bGamma_0^{-1}\bDelta_0(K)\bGamma_0^{-1}.
    \end{equation*}
\end{itemize}
The small bandwidth component (i.e., the term involving $\bDelta_0(K)$) in $\bV_n$ captures the additional uncertainty generated from increasing the localization of the observation pairs. Incorporating this component in the approximate variance is key to enabling us to replace the condition $nh_n^d\to\infty$ by the weaker condition $n^2h_n^d\to\infty$ when obtaining a Gaussian approximation. As demonstrated by \cite{Cattaneo-Farrell-Jansson-Masini_2025_JoE} in a related context, incorporating the small bandwidth component can furthermore lead to a more accurate distributional approximation (in a higher-order asymptotic sense) even under asymptotic linearity.

\subsection{Debiasing}

In Theorem \ref{Theorem: Asymptotic Distribution}, we centered the estimator $\widehat{\btheta}_n=\widehat{\btheta}_n(h_n)$ at $\btheta_n=\btheta(h_n)$ to circumvent bias issues. This section focuses on the bias term $\btheta_n-\btheta_0$ and introduces an automatic debiasing approach under the assumption that $\btheta_n-\btheta_0$ can be expanded in even powers of $h_n$. To be specific, we follow \citet[Section 3.3]{Honore-Powell_2005_Festschrift} and discuss debiasing under the following high-level condition.

\begin{assumption}\label{Assumption: Bias of thetahat}
    For some even $\smooth \geq0$, $\btheta(\cdot)$ admits $\bb_{2l}\in\R^k$ (for $l=1,\dots, \smooth/2$) such that
    \begin{equation*}
        \btheta(h) -\btheta_0 = \sum_{l=1}^{\smooth/2} \bb_{2l} h^{2l} + o(h^{\smooth})\qquad \text{as } h\downarrow 0.
    \end{equation*}
\end{assumption}

The ease with which Assumption \ref{Assumption: Bias of thetahat} can be verified depends on the magnitude of $\smooth$. For instance, Assumption \ref{Assumption: Convergence of M} implies that Assumption \ref{Assumption: Bias of thetahat} holds with $\smooth=0$. Under additional smoothness conditions and using symmetry of $K$, the following result gives conditions under which Assumption \ref{Assumption: Bias of thetahat} holds with $\smooth=2$. When stating the result, we employ the following standard multi-index notation: for $\bnu = (\nu_1,\ldots,\nu_d)'\in\Z_+^d$, $\bar\bw = (\bar w_1,\ldots,\bar w_d)'\in\R^d$, and a sufficiently smooth-in-$\bar{\bw}$ function $f(\bw,\bar{\bw}),$
\begin{equation*}
    \partial_{\bar{\bw}}^{\bnu} f(\bw,\bar{\bw}) = \frac{\partial^{|\bnu|}}{\partial \bar w_1^{\nu_1} \cdots \partial \bar w_d^{\nu_d}} f(\bw,\bar{\bw}), \qquad |\bnu| = \sum_{j=1}^d \nu_j.
\end{equation*}

\begin{prop}\label{Proposition: Bias Expansion L=2}
    Suppose Assumptions \ref{Assumption: Convergence of M}-\ref{Assumption: Asymptotic Distribution} hold and that
    \begin{enumerate}[(i)]
        \item $\int_{\R^d} \|\bu\|^2 K(\bu)d\bu<\infty$, and
        \item with probability one, $\bar{\bw} \mapsto \bvarphi(\bw,\bar{\bw}) = \E[\bs(\bz_1,\bz_2;\btheta_0)|\bw_1=\bw,\bw_2=\bar{\bw}]f_{\bw}(\bar{\bw})$ is twice continuously differentiable with $\E[\sup_{\bar{\bw}\in\cW}\|\partial_{\bar{\bw}}^{\bnu} \bvarphi(\bw,\bar{\bw})\|]<\infty$ for all $\bnu\in\Z_+^d$ with $|\bnu|\leq 2$.
    \end{enumerate}
    Then $\btheta(\cdot)$ admits a $\bb_2\in\R^k$ such that
    \begin{equation*}
        \btheta(h) - \btheta_0 =  \bb_2 h^2 + o(h^2) \qquad \text{ as } h\downarrow 0.
    \end{equation*}
\end{prop}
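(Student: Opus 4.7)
The plan is to work with a first-order condition for $\btheta(h)$ and expand each side in $h$. By Lemma \ref{Lemma: Existence and Convergence of theta(h)}, $\btheta(h)\to\btheta_0$, so $\btheta(h)$ is interior to $\Theta$ for all sufficiently small $h$. Convexity of $\btheta\mapsto m(\bz_1,\bz_2;\btheta)$ together with the integrability and pointwise-convergence clauses of Assumption \ref{Assumption: Asymptotic Distribution}(i)-(ii) let me differentiate $M(\cdot;h)$ under the expectation, yielding the first-order condition $\mathbf{q}(\btheta(h),h)=\0$ with
\begin{equation*}
    \mathbf{q}(\btheta,h) = \E\big[\bs(\bz_1,\bz_2;\btheta) K_h(\bw_1-\bw_2)\big].
\end{equation*}

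I would next expand $\mathbf{q}(\btheta_0,h)$ in $h$. Conditioning on $(\bw_1,\bw_2)$ and making the change of variables $\bv=\bw+h\bu$ gives
\begin{equation*}
    \mathbf{q}(\btheta_0,h) = \int_{\cW} f_{\bw}(\bw) \int \bpsi(\bw,\bw+h\bu) K(\bu)\,d\bu\, d\bw.
\end{equation*}
A second-order Taylor expansion of $\bpsi(\bw,\cdot)$ around $\bw$, justified by condition (ii), combined with the symmetry of $K$ (which kills the first-order term since $\int \bu K(\bu)\,d\bu=\0$), leaves a zeroth-order term
\begin{equation*}
    \int \bpsi(\bw,\bw) f_{\bw}(\bw)\,d\bw = \int \E[\bs(\bz_1,\bz_2;\btheta_0)\mid \bw_1=\bw,\bw_2=\bw] f_{\bw}(\bw)^2\,d\bw,
\end{equation*}
which vanishes because it coincides with the (directional) derivative of $M_0$ at its interior minimizer $\btheta_0$, zero by Assumption \ref{Assumption: Convergence of M}(v). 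Consequently $\mathbf{q}(\btheta_0,h) = h^2 \bc + o(h^2)$, where $\bc$ is an explicit quadratic form in the second-order $\bv$-partials of $\bpsi$ weighted by $f_{\bw}$ and the second moments of $K$; the $o(h^2)$ remainder is uniform thanks to the integrable envelope in condition (ii) and dominated convergence.

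To transfer this to $\btheta(h)-\btheta_0$, I would expand $\mathbf{q}(\btheta,h)$ in $\btheta$ around $\btheta_0$. Using the quasi-second-derivative $\bH$ of Assumption \ref{Assumption: Asymptotic Distribution} together with the change-of-variables and dominated-convergence argument applied to Assumption \ref{Assumption: Asymptotic Distribution}(iii), the Jacobian $\partial_\btheta\mathbf{q}(\btheta_0,h)$ converges to $\bGamma_0=\E[\bG_0(\bw)]$, which is nonsingular by Assumption \ref{Assumption: Asymptotic Distribution}(iv). Combined with $\mathbf{q}(\btheta(h),h)=\0$ and $\btheta(h)\to\btheta_0$, a two-step rate argument (first establishing $\btheta(h)-\btheta_0=O(h^2)$ from the above, then refining the constant) delivers $\btheta(h)-\btheta_0 = -\bGamma_0^{-1} \bc\, h^2 + o(h^2)$; setting $\bb_2 = -\bGamma_0^{-1}\bc$ gives the claim. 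The main obstacle is that $\btheta\mapsto m(\bz_1,\bz_2;\btheta)$ need not be classically smooth, so every ``derivative'' and every interchange of differentiation, expectation, and limit must be routed through the quasi-derivatives $\bs,\bH$ and justified via the integrability and pointwise-convergence clauses of Assumption \ref{Assumption: Asymptotic Distribution}; in particular, obtaining a genuinely $o(h^2)$ (rather than $o(h)$) remainder hinges on the twice continuous differentiability of $\bpsi$ in its second argument together with the envelope on $\partial_{\bv}^{\balpha}\bpsi$ in condition (ii), applied via dominated convergence to pass $h\downarrow 0$ inside the outer $\bw$-integral.
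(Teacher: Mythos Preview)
Your expansion $\mathbf{q}(\btheta_0,h)=h^2\bc+o(h^2)$ via the second-order Taylor expansion of $\bpsi(\bw,\cdot)$ is correct and is exactly the source of the linear term in the paper's argument (your $\bc$ coincides with the paper's $\bbeta_2$). The gap is in the $\btheta$-expansion. You assert that ``the Jacobian $\partial_\btheta\mathbf{q}(\btheta_0,h)$ converges to $\bGamma_0$,'' but Assumption~\ref{Assumption: Asymptotic Distribution} does not furnish a Jacobian of $\mathbf{q}$: the quasi-Hessian $\bH(\bw_1,\bw_2;\btheta,\bt)$ is explicitly permitted to depend on the direction $\bt$ (and does, in the partially linear Tobit example), so the directional second-order expansion of $M$ need not arise from a bilinear form, and $\btheta\mapsto\mathbf{q}(\btheta,h)=\nabla_\btheta M(\btheta;h)$ need not be differentiable at finite $h$. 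That the limit $\bG_0$ in Assumption~\ref{Assumption: Asymptotic Distribution}\ref{Assumption: Asymptotic Distribution - variance ingredients} is $\bt$-free concerns only the joint limit $(\btheta,\bu)\to(\btheta_0,\0)$ and does not deliver the vector linearization $\mathbf{q}(\btheta(h),h)-\mathbf{q}(\btheta_0,h)=(\bGamma_0+o(1))(\btheta(h)-\btheta_0)$ on which both steps of your rate argument rely.

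The paper avoids differentiating $\mathbf{q}$ altogether. It reparametrizes so that $h^{-2}(\btheta(h)-\btheta_0)=\argmin_{\bt}M(\btheta_0+\bt h^2;h)$ and establishes, for each fixed $\bt$,
\[
    h^{-4}\big[M(\btheta_0+\bt h^2;h)-M(\btheta_0;h)\big]\to \bt'\bbeta_2+\tfrac{1}{2}\bt'\bGamma_0\bt,
\]
where the linear piece is your $\bpsi$-expansion and the quadratic piece comes from the directional second-order information in Assumption~\ref{Assumption: Asymptotic Distribution} (only $\bt'\bH\bt\to\bt'\bG_0\bt$ for that particular $\bt$ is needed). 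Because the left-hand side is convex in $\bt$, the corollary to \citet[Lemma~2]{Hjort-Pollard_1993} upgrades pointwise convergence of convex functions to convergence of argmins, giving $h^{-2}(\btheta(h)-\btheta_0)\to-\bGamma_0^{-1}\bbeta_2$. Convexity thus does the work your missing Jacobian would have done, and the argument uses exactly the directional structure the assumptions are designed to provide.
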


The proof of Proposition \ref{Proposition: Bias Expansion L=2} leverages convexity and may therefore be of independent interest. The convexity argument in question can furthermore be adapted to form the basis of a verification by induction of Assumption \ref{Assumption: Bias of thetahat} with $\smooth >2$. Details are provided in Section \ref{Section: Verifying Assumption 3}, which describes the induction step for general $\smooth$ and states explicit (smoothness) conditions under which Assumption \ref{Assumption: Bias of thetahat} holds with $\smooth =4$.

To describe the debiasing procedure based on generalized jackknifing, we maintain Assumption \ref{Assumption: Bias of thetahat}, define $c_0=1$, choose a non-negative integer $L$, and let $\bc=(c_0,\dots, c_{L})'$ be a vector of (distinct) positive constants such that the following vector is well defined:
\begin{equation*}
    \begin{pmatrix}
        \lambda_0(\bc) \\
        \lambda_1(\bc) \\
        \vdots \\
        \lambda_{L}(\bc)
    \end{pmatrix}
    = 
    \begin{pmatrix}
        1 & 1 &\dots &1 \\
        1 & c_1^2 &\dots &c_{L}^2 \\
        \vdots & &\ddots & \\
        1& c_1^{2L} & \dots & c_{L}^{2L}
    \end{pmatrix}^{-1}
    \begin{pmatrix}
        1 \\
        0 \\
        \vdots \\
        0
    \end{pmatrix}.
\end{equation*}
The debiased estimator is
\begin{equation*}
    \widetilde{\btheta}_n = \widetilde{\btheta}_n(\bc,h_n) = \sum_{l=0}^{L} \lambda_{l}(\bc) \widehat{\btheta}_n(h_{n,l}), \qquad h_{n,l}=c_lh_n,
\end{equation*}
the construction of which involves solving $L+1$ convex optimization problems. As defined, the debiased estimator is a generalization of the original pairwise difference estimator because if $L=0$, then $\bc=1=\lambda_0$ and therefore $\widetilde{\btheta}_n = \widehat{\btheta}_n$.

The next theorem generalizes Theorem \ref{Theorem: Asymptotic Distribution} by establishing the small bandwidth Gaussian approximation for $\widetilde{\btheta}_n$. To state the theorem, let
\begin{equation*}
    \bar{\btheta}_n = \bar{\btheta}_n(\bc,h_n) = \sum_{l=0}^{L} \lambda_l(\bc) \btheta(h_{n,l})
\end{equation*}
and
\begin{equation*}
    \bar{\bV}_n = \bar{\bV}_n(\bc,h_n) = \bGamma_0^{-1}\left[n^{-1}\bSigma_0 + \binom{n}{2}^{-1} h_n^{-d}\bDelta_0(\bar{K}) \right]\bGamma_0^{-1}, \qquad \bar{K}(\bu) = \bar{K}(\bu;\bc) = \sum_{l=0}^{L}\lambda_{l}(\bc)K_{c_l}(\bu).
\end{equation*}
As they should, the expressions have the feature that if $L=0$, then $\bar{\btheta}_n = \btheta_n$ and $\bar{\bV}_n=\bV_n$. Another noteworthy feature of the expressions is that debiasing via generalized jackknifing affects the variance $\bar{\bV}_n$ only through the kernel shape entering its small bandwidth component.

\begin{thm}\label{Theorem: Generalized Jackknifing}
    Suppose Assumptions \ref{Assumption: Convergence of M} and \ref{Assumption: Asymptotic Distribution} hold. If $n^2h_n^d\to\infty$ and if $h_n\to 0$, then
    \begin{equation*}
        \sup_{\bt\in\R^k} \left| \P\left[ \bar{\bV}_n^{-1/2}(\widetilde{\btheta}_n-\bar{\btheta}_n ) \leq \bt \right] - \Phi_k(\bt) \right| \to 0.
    \end{equation*}
    As a consequence, if also Assumption \ref{Assumption: Bias of thetahat} holds with $\smooth \geq 2L+2$, and if $nh_n^{4(L+1)}\to0$, then
    \begin{equation*}
        \sup_{\bt\in\R^k} \left| \P\left[ \bar{\bV}_n^{-1/2}(\widetilde{\btheta}_n-\btheta_0 ) \leq \bt \right] - \Phi_k(\bt) \right| \to 0
    \end{equation*}
\end{thm}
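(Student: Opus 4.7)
The plan is to reduce Theorem~\ref{Theorem: Generalized Jackknifing} to a componentwise application of Theorem~\ref{Theorem: Asymptotic Distribution}, exploiting the algebraic cancellations built into the generalized jackknife weights. Writing
$$\widetilde{\btheta}_n - \bar{\btheta}_n = \sum_{l=0}^{L/2} \lambda_l(\bc)\,\bigl(\widehat{\btheta}_n(h_{n,l}) - \btheta(h_{n,l})\bigr),$$
I would reuse the Hjort--Pollard convexity argument behind Theorem~\ref{Theorem: Asymptotic Distribution} to obtain, uniformly across the finite collection $\{h_{n,l}\}_{l=0}^{L/2}$, a Bahadur-type linear representation
$$\widehat{\btheta}_n(h_{n,l}) - \btheta(h_{n,l}) = -\bGamma_0^{-1}\,\bU_n^{(l)} + o_{\P}(\rho_n^{-1}),$$
where $\bU_n^{(l)}$ is the centered U-statistic with symmetric kernel $\bs(\bz_i,\bz_j;\btheta(h_{n,l}))K_{h_{n,l}}(\bw_i-\bw_j)$. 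Summing with weights $\lambda_l(\bc)$ gives
$$\widetilde{\btheta}_n - \bar{\btheta}_n = -\bGamma_0^{-1}\sum_{l=0}^{L/2}\lambda_l(\bc)\,\bU_n^{(l)} + o_{\P}(\rho_n^{-1}).$$

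The crucial observation is that $\sum_l \lambda_l(\bc) K_{h_{n,l}}(\cdot) = \bar{K}_{h_n}(\cdot)$ by definition of $\bar{K}$, so the composite U-statistic has symmetric kernel asymptotically equivalent to $\bs(\bz_i,\bz_j;\btheta_0)\bar{K}_{h_n}(\bw_i-\bw_j)$, with $\bar{K}$ possibly taking negative values. A Hoeffding decomposition of each $\bU_n^{(l)}$ separates a H\'ajek projection, whose influence function limits to $\bxi_0(\bz)/2$ by Assumption~\ref{Assumption: Asymptotic Distribution}(\ref{Assumption: Asymptotic Distribution - variance ingredients}) \emph{independently of $l$}, from a degenerate piece. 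Since $\sum_l \lambda_l(\bc)=1$ (first row of the defining linear system), the H\'ajek parts collapse to $n^{-1}\sum_i \bxi_0(\bz_i)$, contributing $n^{-1}\bGamma_0^{-1}\bSigma_0\bGamma_0^{-1}$ to the limiting variance. The degenerate parts combine into a single degenerate U-statistic whose variance is $\binom{n}{2}^{-1}h_n^{-d}\bDelta_0(\bar{K})$ up to lower-order terms, via the same Laplace change-of-variable calculation used for Theorem~\ref{Theorem: Asymptotic Distribution} with $K$ replaced by $\bar{K}$. The two pieces being asymptotically uncorrelated, the standard CLT and the degenerate U-statistic CLT deliver pointwise Gaussian convergence of $\bar{\bV}_n^{-1/2}(\widetilde{\btheta}_n - \bar{\btheta}_n)$, and Polya's theorem upgrades this to uniform convergence of distribution functions.

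For the second claim, decompose $\widetilde{\btheta}_n - \btheta_0 = (\widetilde{\btheta}_n - \bar{\btheta}_n) + (\bar{\btheta}_n - \btheta_0)$. Plugging Assumption~\ref{Assumption: Bias of thetahat} into the definition of $\bar{\btheta}_n$ yields
$$\bar{\btheta}_n - \btheta_0 = \sum_{j=1}^{L/2} \bb_{2j}\, h_n^{2j}\Bigl(\sum_{l=0}^{L/2}\lambda_l(\bc)\, c_l^{2j}\Bigr) + o(h_n^L),$$
and by construction of $\lambda(\bc)$ the inner sums vanish for every $j=1,\dots,L/2$, so $\bar{\btheta}_n - \btheta_0 = o(h_n^L)$. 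Since $\|\bar{\bV}_n^{-1/2}\| = O(\sqrt{n})$, the condition $nh_n^{2L}\to 0$ forces $\bar{\bV}_n^{-1/2}(\bar{\btheta}_n - \btheta_0)=o(1)$, and Slutsky combined with the first part completes the proof. The main technical obstacle I anticipate is the \emph{uniform-in-$l$} Bahadur representation: a black-box bandwidth-by-bandwidth invocation of Theorem~\ref{Theorem: Asymptotic Distribution} is not enough, so I would revisit the convexity and stochastic-equicontinuity steps of its proof to ensure that the $o_{\P}(\rho_n^{-1})$ remainders hold simultaneously across $\{h_{n,l}\}$ and to justify the effective-kernel substitution $K \rightsquigarrow \bar{K}$ despite $\bar{K}$ not being a probability density.
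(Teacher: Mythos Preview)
Your proposal is correct and follows essentially the same route as the paper: establish the Bahadur representation $\widehat{\btheta}_n(h_{n,l})-\btheta(h_{n,l})=-\bGamma_0^{-1}\widehat{\bU}_{n,l}+o_\P(\rho_n^{-1})$ via the Hjort--Pollard convexity argument, sum with the jackknife weights, Hoeffding-decompose the composite U-statistic, and apply a CLT; the bias part is handled exactly as you describe. Two small points where you can tighten the plan. First, your ``main technical obstacle'' is not one: the collection $\{h_{n,l}\}_{l=0}^{L/2}$ is finite with each $h_{n,l}=c_lh_n$ for a fixed constant $c_l>0$, so the $o_\P(\rho_n^{-1})$ remainders can simply be established bandwidth-by-bandwidth and added (the paper does exactly this, proving \eqref{eq: Hjort-Pollard stochastic expansion} for arbitrary fixed $l$). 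Second, rather than invoking a standard CLT for the H\'ajek part and a separate degenerate U-statistic CLT and then appealing to asymptotic uncorrelatedness, the paper obtains the joint limit of $(\sqrt{n}\,\widetilde{\bL}_n,\sqrt{\binom{n}{2}h_n^d}\,\widetilde{\bW}_n)$ directly via Cram\'er--Wold and the Heyde--Brown martingale CLT; this is cleaner because in the knife-edge regime $nh_n^d\to c\in(0,\infty)$ both pieces contribute and marginal convergence plus uncorrelatedness does not by itself yield joint normality.
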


The magnitude of $\bar{\bV}_n^{-1/2}$ is the same as that of $\bV_n^{-1/2}$. With obvious modifications, the discussion of $\widehat{\btheta}_n$ following Theorem \ref{Theorem: Asymptotic Distribution} therefore applies to $\widetilde{\btheta}_n$, the only noteworthy difference being that (by design) the relevant ``small bias'' condition is different (and typically milder) in the case of $\widetilde{\btheta}_n$.

When $L=0$ (i.e.,\ one uses the original pairwise difference estimator), the second part of Theorem \ref{Theorem: Generalized Jackknifing} imposes $nh_n^4\to 0$, which coincides with the standard small bias condition in \eqref{eq: Bandwidth Conditions}, while the first part of the theorem still accommodates the small bandwidth distributional approximation.

It is worth noting that for $L\geq 1$, the equivalent kernel $\bar{K}$ is of higher order, even though the debiased estimator $\widetilde{\btheta}_n$ only employs estimators constructed using second-order kernels, thereby retaining the desired convexity for implementation. To be specific, if $\int_{\R^d} \|\bu\|^{2L+2} K(\bu)d\bu < \infty$, then $\bar{K}$ is of order $2L+2$ because
\begin{equation*}
    \int_{\R^d} \bar{K}(\bu)d\bu = \sum_{l=0}^{L}\lambda_l(\bc) \int_{\R^d} K_{c_{l}}(\bu)d\bu = \sum_{l=0}^{L}\lambda_l(\bc) = 1
\end{equation*}
and, for $\bnu=(\nu_1,\dots,\nu_d)'\in\Z^d_+$ with $0<|\bnu|\leq 2L+1$,
\begin{equation*}
    \int_{\R^d} \bu^{\bnu} \bar{K}(\bu)d\bu = \sum_{l=0}^{L} \lambda_l(\bc) \int_{\R^d} \bu^{\bnu} K_{c_l}(\bu)d\bu = \sum_{l=0}^{L} \lambda_l(\bc) c_l^{|\bnu|} \int_{\R^d} \bu^{\bnu} K(\bu)d\bu =0,
\end{equation*}
where the last equality uses the defining property of $\{\lambda_l(\bc)\}$ and symmetry of $K$, and where $\bu^{\bnu}$ denotes $\prod_{j=1}^d u_j^{\nu_j}$ for $\bu=(u_1,\dots,u_d)'\in\R^d$. 

\subsection{Bootstrapping}\label{Section: Bootstrapping}

To develop feasible inference procedures that do not require (explicit) estimation of $\bar{\bV}_n$, we consider nonparametric bootstrap-based approximations to the distribution of $\widetilde{\btheta}_n$. (Since $\widetilde{\btheta}_n=\widehat{\btheta}_n$ when $L=0$, results for $\widehat{\btheta}_n$ can be extracted by setting $L=0$ in what follows.)

Letting $\bz^*_{1,n},\dots,\bz^*_{n,n}$ denote a random sample from the empirical distribution of $\bz_1,\dots,\bz_n$, the defining property of $\widehat{\btheta}_n^*(h)$, the nonparametric bootstrap analogue of $\widehat{\btheta}_n(h)$, is the following:
\begin{equation*}
    \widehat{M}_n^*(\widehat{\btheta}_n^*(h);h) \leq \inf_{\btheta\in\Theta} \widehat{M}_n^*(\btheta;h)  + o_{\P}(n^{-1}),
\end{equation*}
where
\begin{equation*}
    \widehat{M}_n^*(\btheta;h) = \binom{n}{2}^{-1} \sum_{i<j}m(\bz^*_{i,n},\bz^*_{j,n};\btheta) K_h(\bw^*_{i,n}-\bw^*_{j,n}).
\end{equation*}
Similarly, the nonparametric bootstrap analogue of $\widetilde{\btheta}_n$ is
\begin{equation*}
    \widetilde{\btheta}_n^* = \widetilde{\btheta}_n^*(\bc)= \sum_{l=0}^{L} \lambda_l(\bc) \widehat{\btheta}_{n}^*(h_{n,l}).
\end{equation*}

The following theorem characterizes the large sample properties of $\widetilde{\btheta}_n^*-\widetilde{\btheta}_n$, the bootstrap counterpart of $\widetilde{\btheta}_n-\bar{\btheta}_n$. In perfect analogy with the results in \cite{Cattaneo-Crump-Jansson_2014b_ET}, we find that the bootstrap distribution estimator is consistent only when $nh_n^d\to\infty$, but otherwise exhibits a variance inflation making the distributional approximation inconsistent. To state the result, let $\P_n^*[\cdot]$ denote $\P[\cdot | \bz_1,\ldots,\bz_n]$, let $\to_\P$ denote convergence in probability, and define
\begin{equation*}
    \bar{\bV}_n^* = \bar{\bV}_n^*(\bc,h_n) = \bGamma_0^{-1}\left[n^{-1}\bSigma_0 + 3\binom{n}{2}^{-1} h_n^{-d}\bDelta_0(\bar{K}) \right]\bGamma_0^{-1}.
\end{equation*}

\begin{thm}\label{Theorem: Bootstrapping}
    Suppose Assumptions \ref{Assumption: Convergence of M}-\ref{Assumption: Asymptotic Distribution} hold and that, for $\btheta$ near $\btheta_0$ (and with probability one), $m(\bz,\bz;\btheta)=0$ and $\bs(\bz,\bz;\btheta)=\0$. If $n^2h_n^d\to\infty$ and if $h_n\to0$, then
    \begin{equation*}
        \sup_{\bt\in\R^k} \left| \P_n^*\left[ \bar{\bV}_n^{*-1/2}(\widetilde{\btheta}_n^*-\widetilde{\btheta}_n ) \leq \bt \right] - \Phi_k(\bt) \right| \to_\P 0.
    \end{equation*}
\end{thm}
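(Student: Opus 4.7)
The plan is to adapt the convex argmin strategy behind Theorem \ref{Theorem: Generalized Jackknifing}, due to \cite{Hjort-Pollard_1993} and \cite{Pollard_1991_ET}, to a bootstrap setting conditional on the data, with the main novelty being the emergence of the factor $3$ on the small-bandwidth variance component of $\bar{\bV}_n^*$. Define the rescaled, recentered, and convex bootstrap objective
\begin{equation*}
    \widetilde{\bG}_n^*(\bt) = n \sum_{l=0}^{L/2} \lambda_l(\bc) \bigl[ \widehat{M}_n^*(\widetilde{\btheta}_n + \bt/\sqrt{n};h_{n,l}) - \widehat{M}_n^*(\widetilde{\btheta}_n;h_{n,l}) \bigr],
\end{equation*}
which is convex in $\bt$ by Assumption \ref{Assumption: Convergence of M}(ii). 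Applying Pollard's basic corollary conditionally on $\bz_1,\ldots,\bz_n$, it suffices to prove, in $\P$-probability, that $\widetilde{\bG}_n^*(\bt) = \frac{1}{2}\bt'\bGamma_0\bt - \bt'\bZ_n^* + o_{\P^*}(1)$ pointwise in $\bt$, where $\bZ_n^*$ is conditionally asymptotically Gaussian with variance $n\bGamma_0\bar{\bV}_n^*\bGamma_0$.

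Second, I would expand $\widehat{M}_n^*$ via the directional Taylor argument built on the difference quotients $r_\bt, R_\bt$ of Assumption \ref{Assumption: Asymptotic Distribution}, mirroring the proof of Theorem \ref{Theorem: Generalized Jackknifing} but with all probability statements under $\P^*$. The linear coefficient $\bZ_n^*$ becomes a rescaled combination of the bootstrap $V$-processes
\begin{equation*}
    \bS_n^*(\btheta_0;h) = \binom{n}{2}^{-1}\sum_{i<j} \bs(\bz^*_{i,n},\bz^*_{j,n};\btheta_0) K_h(\bw^*_{i,n}-\bw^*_{j,n}),
\end{equation*}
each recentered at the corresponding in-sample $U$-process (obtained by removing stars) in order to annihilate the deterministic drift from $\widetilde{\btheta}_n - \btheta_0$. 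The quadratic coefficient is a bootstrap second-moment object built from $\bH$; using Assumption \ref{Assumption: Asymptotic Distribution}(ii)-(iii) together with the closeness of the empirical and true distributions, a bootstrap uniform law of large numbers yields convergence to $\frac{1}{2}\bt'\bGamma_0\bt$ in $\P^*$-probability, while Assumption \ref{Assumption: Asymptotic Distribution}(i) controls the remainders $r_\bt, R_\bt$.

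The main obstacle is the conditional limit of $\bZ_n^*$, where the factor $3$ in $\bar{\bV}_n^*$ must be produced. Because bootstrap sampling is with replacement, $\bS_n^*$ is a $V$-statistic in the empirical distribution, so I would compute its conditional variance by enumerating ordered pairs of index-pairs $((i,j),(i',j'))$ according to the overlap $|\{i,j\}\cap\{i',j'\}|$. The maintained conditions $m(\bz,\bz;\btheta)=0$ and $\bs(\bz,\bz;\btheta)=\0$ remove the deterministic diagonal mass from the objective and the score, so the recentering suppresses all leading-order conditional bias, and the overlap-$0$ configuration factorizes and contributes nothing to the variance at leading order. The overlap-$2$ configuration produces $\binom{n}{2}^{-1}h_n^{-d}\bDelta_0(\bar{K})$, exactly as in the i.i.d. $U$-statistic calculation. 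The overlap-$1$ configuration, after expanding $E^*[\bs(\bz^*_i,\bz^*_j;\btheta_0)\bs(\bz^*_i,\bz^*_{j'};\btheta_0)' K_h(\bw^*_i-\bw^*_j)K_h(\bw^*_i-\bw^*_{j'})]$ and splitting according to whether the non-shared bootstrap indices collide (an event of conditional probability $1/n$), yields the Hájek-projection contribution $n^{-1}\bSigma_0$ plus an additional $2\binom{n}{2}^{-1}h_n^{-d}\bDelta_0(\bar{K})$ attributable to the collision event. Summing the three configurations delivers the factor $3$, and a conditional Lindeberg--Feller CLT applied to the Hoeffding components gives the asymptotic Gaussianity of $\bZ_n^*$; combining with step one completes the proof.
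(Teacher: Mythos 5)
Your variance computation via overlap enumeration is correct and matches the mechanism underlying the paper's formula for $\bar{\bV}_n^*$: the overlap-$2$ configuration produces $\binom{n}{2}^{-1}h_n^{-d}\bDelta_0(\bar K)$, the overlap-$1$ configuration produces the H\'ajek term $n^{-1}\bSigma_0$ plus an additional $2\binom{n}{2}^{-1}h_n^{-d}\bDelta_0(\bar K)$ coming from the probability-$1/n$ collision of the two unshared bootstrap draws (which is precisely where $m(\bz,\bz;\cdot)=0$, $\bs(\bz,\bz;\cdot)=\0$ are needed to kill the $K_{h}(\0)$ spikes), and overlap-$0$ vanishes. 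These add to the factor $3$. The paper encodes the same arithmetic through the parameterization $\pi_n = \sqrt{nh_n^d}/(1+\sqrt{nh_n^d})$ and the conditional variance identity $\pi_n^2\bSigma_0 + 4(1-\pi_n)^2\bDelta_0(\bar K)$ for the projected component, but the content is the same.

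The genuine gap is in your first step. The generalized-jackknife weights $\lambda_l(\bc)$ necessarily have mixed signs (since $\sum_l\lambda_l(\bc)=1$ while $\sum_l\lambda_l(\bc) c_l^{2}=0$), so the $\lambda_l$-weighted aggregate $\widetilde{\bG}_n^*(\bt)$ is \emph{not} convex in $\bt$, contrary to your claim, and Assumption \ref{Assumption: Convergence of M}\ref{Assumption: Convergence of M - convexity} does not help. Moreover, even if it were convex its argmin would not be $\widetilde{\btheta}_n^*$: the debiased bootstrap estimator is defined as the $\lambda_l$-combination of the $L/2+1$ separate minimizers $\widehat{\btheta}_n^*(h_{n,l})$, not as the minimizer of a combined criterion. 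So applying Pollard's convexity corollary to $\widetilde{\bG}_n^*$ cannot characterize $\widetilde{\btheta}_n^*$. The paper instead applies the Hjort--Pollard argument to each convex $\bt\mapsto\widehat{M}^*_n(\btheta_{n,l}+\bt\rho_n^{-1};h_{n,l})$ separately, obtaining the per-bandwidth linearization $\widehat{\btheta}^*_{n,l}-\btheta_{n,l}=-\bGamma_0^{-1}\big(\widehat{\bU}^*_{n,l}+\widehat{\bU}_{n,l}\big)+o_\P(\rho_n^{-1})$, then subtracts the sample-level expansion $\widehat{\btheta}_{n,l}-\btheta_{n,l}=-\bGamma_0^{-1}\widehat{\bU}_{n,l}+o_\P(\rho_n^{-1})$ and only \emph{then} forms the $\lambda_l$-combination at the linearized level, so that $\widetilde{\btheta}^*_n-\widetilde{\btheta}_n=-\bGamma_0^{-1}\widetilde{\bU}^*_n+o_\P(\rho_n^{-1})$. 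Your overlap calculation then goes through for $\widetilde{\bU}^*_n$. Also, centering the local scale at $n^{-1/2}$ rather than $\rho_n^{-1}$ means your claimed quadratic limit cannot hold pointwise with a fixed positive definite curvature when $nh_n^d\not\to\infty$; the paper's $\rho_n$-scaling keeps both the curvature and the score $O_\P(1)$ across all permitted bandwidth regimes and should be used.
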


Because $\bar{\bV}_n^{-1}\bar{\bV}_n^*\to \bI_k$ if and only if $nh_n^d\to\infty$ (where $\bI_k$ denotes the $k$-dimensional identity matrix), under the assumptions of Theorem \ref{Theorem: Bootstrapping},
\begin{equation*}
    \sup_{\bt\in\R^k} \left| \P_n^*\left[ \widetilde{\btheta}_n^*-\widetilde{\btheta}_n \leq \bt \right] - \P\left[ \widetilde{\btheta}_n - \bar{\btheta}_n \leq \bt \right] \right| \to_\P 0
\end{equation*}
if and only if $nh_n^d\to\infty$. In particular, if $\liminf_{n\to\infty} nh_n^d<\infty$, then the nonparametric bootstrap is inconsistent, albeit conservative in the sense that the (approximate) variance under the bootstrap distribution is larger than the (approximate) variance of the asymptotic distribution: $\bar{\bV}_n^* > \bar{\bV}_n$ in a positive definite sense.

The variance inflation problem associated with the nonparametric bootstrap under the small bandwidth regime can be easily fixed by appropriately rescaling the bandwidth used for the bootstrap implementation of the pairwise difference estimator. Employing
\begin{equation*}
    \breve{\btheta}_n^* = \breve{\btheta}_n^*(\bc,h_n)= \sum_{l=0}^{L} \lambda_l(\bc) \widehat{\btheta}_{n}^*(3^{1/d}h_{n,l})
\end{equation*}
and centering its distribution at
\begin{equation*}
    \breve{\btheta}_n = \breve{\btheta}_n(\bc,h_n)= \sum_{l=0}^{L} \lambda_l(\bc) \widehat{\btheta}_{n}(3^{1/d}h_{n,l})
\end{equation*}
automatically adjusts the bootstrap variance, leading to a consistent distributional approximation. Indeed, the following result is an immediate consequence of Theorems \ref{Theorem: Generalized Jackknifing} and \ref{Theorem: Bootstrapping}.

\begin{coro}\label{Corollary: Bootstrapping}
   If the assumptions of Theorem \ref{Theorem: Bootstrapping} hold, then
   \begin{equation*}
        \sup_{\bt\in\R^k} \left| \P_n^*\left[ \breve{\btheta}_n^* - \breve{\btheta}_n \leq \bt \right] - \P\left[ \widetilde{\btheta}_n - \bar{\btheta}_n \leq \bt \right] \right| \to_\P 0.
    \end{equation*}
    As a consequence, if also Assumption \ref{Assumption: Bias of thetahat} holds with $\smooth \geq 2L+2$, and if $nh_n^{4(L+1)}\to0$, then
    \begin{equation*}
        \sup_{\bt\in\R^k} \left| \P_n^*\left[ \breve{\btheta}_n^* - \breve{\btheta}_n \leq \bt \right] - \P\left[ \widetilde{\btheta}_n - \btheta_{0} \leq \bt \right] \right| \to_\P 0.
    \end{equation*}    
\end{coro}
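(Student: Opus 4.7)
The plan is to combine Theorems \ref{Theorem: Generalized Jackknifing} and \ref{Theorem: Bootstrapping}, exploiting the fact that rescaling the bootstrap bandwidth by $3^{1/d}$ is engineered precisely to cancel the variance-inflation factor $3$ appearing in Theorem \ref{Theorem: Bootstrapping}. The key algebraic observation is that, since $(3^{1/d}h_n)^d = 3h_n^d$,
\begin{equation*}
    \bar{\bV}_n^*(\bc, 3^{1/d}h_n) = \bGamma_0^{-1}\left[n^{-1}\bSigma_0 + 3\binom{n}{2}^{-1} (3h_n^d)^{-1}\bDelta_0(\bar{K}) \right]\bGamma_0^{-1} = \bar{\bV}_n(\bc, h_n),
\end{equation*}
so that the bootstrap variance at the rescaled bandwidth coincides with the sampling variance at the original bandwidth.

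To establish the first claim, I would apply Theorem \ref{Theorem: Bootstrapping} with the bandwidth sequence $3^{1/d}h_n$ in place of $h_n$, which is permissible because $n^2 h_n^d \to \infty$ and $h_n \to 0$ are invariant under positive rescaling. The resulting statement pertains to the conditional distribution of $\breve{\btheta}_n^* - \breve{\btheta}_n$ (the appropriate centering, because each $\widehat{\btheta}_n^*(h_{n,l})$ appearing in $\widetilde{\btheta}_n^*$ is replaced by $\widehat{\btheta}_n^*(3^{1/d}h_{n,l})$, and likewise for its non-bootstrap counterpart), standardized by $\bar{\bV}_n^*(\bc, 3^{1/d}h_n) = \bar{\bV}_n(\bc, h_n)$. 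Applying Theorem \ref{Theorem: Generalized Jackknifing} at bandwidth $h_n$ then delivers the corresponding unconditional Gaussian approximation for $\widetilde{\btheta}_n - \bar{\btheta}_n$ with the \emph{same} standardizing matrix. A triangle inequality, together with the continuity of $\Phi_k$, yields the first claim.

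For the second claim, the remaining task is to replace $\bar{\btheta}_n$ by $\btheta_0$ in the centering. Using Assumption \ref{Assumption: Bias of thetahat} and the identity $\sum_{l=0}^{L/2}\lambda_l(\bc) = 1$, I would expand
\begin{equation*}
    \bar{\btheta}_n - \btheta_0 = \sum_{l=0}^{L/2} \lambda_l(\bc)[\btheta(h_{n,l}) - \btheta_0] = \sum_{j=1}^{L/2} \bb_{2j} h_n^{2j} \sum_{l=0}^{L/2} \lambda_l(\bc) c_l^{2j} + o(h_n^L).
\end{equation*}
The defining property of $\{\lambda_l(\bc)\}$ annihilates the inner sums for $j=1,\dots,L/2$, so $\|\bar{\btheta}_n - \btheta_0\| = o(h_n^L) = o(n^{-1/2})$ under $nh_n^{2L}\to 0$. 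Since this bias is of smaller order than the magnitude of $\bar{\bV}_n^{1/2}$, the shift is asymptotically negligible after standardization (by a standard Lipschitz-continuity argument for $\Phi_k$), and the second claim follows from the first. No step poses a serious obstacle, since all substantive probabilistic content has already been assembled in the prior theorems; the only genuinely new ingredient is the variance-matching identity displayed above, which is precisely the design principle motivating the factor $3^{1/d}$ in the definitions of $\breve{\btheta}_n^*$ and $\breve{\btheta}_n$.
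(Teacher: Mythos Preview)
Your proposal is correct and matches the paper's approach: the paper simply states that the corollary ``is an immediate consequence of Theorems \ref{Theorem: Generalized Jackknifing} and \ref{Theorem: Bootstrapping},'' and you have spelled out precisely why, with the variance-matching identity $\bar{\bV}_n^*(\bc,3^{1/d}h_n)=\bar{\bV}_n(\bc,h_n)$ being the key observation. For the second claim you could alternatively appeal directly to the second part of Theorem \ref{Theorem: Generalized Jackknifing} (which already replaces $\bar{\btheta}_n$ by $\btheta_0$ under Assumption \ref{Assumption: Bias of thetahat} and $nh_n^{2L}\to 0$), but your explicit bias computation is equally valid.
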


The statement of Corollary \ref{Corollary: Bootstrapping} emphasizes the rate-adaptive nature of the consistency property enjoyed by the bootstrap distributional approximation.

\subsection{Discussion}\label{Section: Discussion}

Our results provide a simple practical message: bootstrap-based inference for convex pairwise difference estimators is highly sensitive to bandwidth choice under conventional implementations, but this sensitivity can be substantially reduced by combining debiasing with a bandwidth rescaling. In particular, the classical bootstrap is valid only over a relatively narrow range of bandwidth sequences, whereas generalized jackknifing enlarges the set of admissible large bandwidths, small bandwidth asymptotics justifies an implementation enlarging the set of admissible small bandwidths, and the combination of the two yields the broadest range of valid inference.

To summarize these conclusions, suppose $\smooth\geq 2L+2$, let $\alpha \in (0,1)$ and $\ba\in\R^k$ be fixed, and consider the following family of percentile bootstrap confidence intervals \citep[in the terminology of][]{vanderVaart_1998_Book} for $\ba'\btheta_0$:
\begin{align*}
    \mathsf{CI}_{n,1-\alpha}^*(B,L)
    = \Big[\ba'\widetilde{\btheta}_n(\bc,h_n) - \widetilde{q}^*_{1-\alpha/2,n}(Bh_n,L) ~ , ~
           \ba'\widetilde{\btheta}_n(\bc,h_n) - \widetilde{q}^*_{\alpha/2,n}(Bh_n,L) \Big],
\end{align*}
where
\begin{align*}
    \widetilde{q}^*_{t,n}(h,L)
    = \inf\big\{ q \in \R : \P_n^* [\ba'\widetilde{\btheta}_n^*(\bc,h) - \ba'\widetilde{\btheta}_n(\bc,h)\leq q] \geq t \big\}.
\end{align*}
The notation highlights the two tuning parameters that vary across the different inference procedures: given a choice of bandwidth $h$ used in the point estimator, the scaling constant $B$ used in the bootstrap approximation and the debiasing order $L=\dim(\bc)-1$, which together with $\bc$ determines the generalized jackknife weights.

To sharpen the practical interpretation of these conclusions, it is useful to view each procedure through its coverage function
\begin{align*}
    \mathsf{CF}_{n,1-\alpha}(B,L)=\P\!\left[\ba'\btheta_0\in \mathsf{CI}_{n,1-\alpha}^*(B,L)\right].
\end{align*}
Our theory implies that the main differences across procedures can be understood in terms of how $\mathsf{CF}_{n,1-\alpha}(B,L)$ behaves as the bandwidth sequence moves from the small-bandwidth region to the large-bandwidth region. In particular, the classical bootstrap tends to become conservative when the bandwidth is too small, because the bootstrap variance is no longer correctly matched under small bandwidth asymptotics, and it tends to undercover when the bandwidth is too large, because smoothing bias becomes non-negligible. Debiasing mitigates the latter problem by shifting the onset of bias-driven undercoverage toward larger bandwidths, whereas bootstrap rescaling mitigates the former problem by restoring correct variance matching in the small-bandwidth region. Combining the two therefore produces the flattest coverage profile, in the sense that the coverage function remains close to its nominal level over the widest range of bandwidth sequences.

We formally compare the four implementations, ordered from the conventional procedure to our recommended method.
\begin{itemize}
    \item \textit{Classical Method}. This corresponds to $(B,L)=(1,0)$. If Assumptions \ref{Assumption: Convergence of M}-\ref{Assumption: Bias of thetahat} hold and
    \begin{align*}
        n h_n^d\to\infty
        \qquad\text{and}\qquad
        n h_n^{4} \to 0,
    \end{align*}
    then
    \begin{align*}
        \lim_{n \to \infty} \P\left[\ba'\btheta_0 \in \mathsf{CI}_{n,1-\alpha}^*(1,0) \right] = 1 - \alpha.
    \end{align*}
    The admissible range of bandwidth sequences is narrow, and requires $d<4$. In terms of the coverage function, the classical percentile bootstrap has two distinct failure modes. When the bandwidth is too small, so that $n h_n^d$ is no longer large while $n^2 h_n^d\to\infty$ still holds, the procedure becomes conservative: $\mathsf{CF}_{n,1-\alpha}(1,0)$ rises above its nominal level because the bootstrap variance is too large relative to the sampling variance. When the bandwidth is too large, so that the bias condition fails, $\mathsf{CF}_{n,1-\alpha}(1,0)$ falls below its nominal level because smoothing bias is no longer negligible. Thus, the classical method delivers accurate coverage only over a narrow range of bandwidth choices.

    \item \textit{Classical Debiased Method}. This corresponds to $(B,L)=(1,L)$ for some $L\geq 1$. If Assumptions \ref{Assumption: Convergence of M}-\ref{Assumption: Bias of thetahat} hold and
    \begin{align*}
        n h_n^d\to\infty
        \qquad\text{and}\qquad
        n h_n^{4(L+1)} \to 0,
    \end{align*}
    then
    \begin{align*}
        \lim_{n \to \infty} \P\left[\ba'\btheta_0 \in \mathsf{CI}_{n,1-\alpha}^*(1,L) \right] = 1 - \alpha.
    \end{align*}
    The range of allowable bandwidth sequences is wider on the large-bandwidth side than for the classical method, and requires $d<4(L+1)$. Relative to the classical method, debiasing leaves the small-bandwidth behavior essentially unchanged: if the bandwidth is too small, the procedure may still be conservative because the bootstrap variance mismatch remains. Its gain is instead on the large-bandwidth side, where the onset of bias-driven undercoverage is pushed outward from the $n h_n^4$ scale to the weaker $n h_n^{4(L+1)}$ scale. Consequently, $\mathsf{CF}_{n,1-\alpha}(1,L)$ stays close to nominal coverage over a wider range of large bandwidth sequences than $\mathsf{CF}_{n,1-\alpha}(1,0)$.

    \item \textit{Small Bandwidth Method}. This corresponds to $(B,L)=(3^{1/d},0)$. If Assumptions \ref{Assumption: Convergence of M}-\ref{Assumption: Bias of thetahat} hold and
    \begin{align*}
        n^2 h_n^d\to\infty
        \qquad\text{and}\qquad
        n h_n^{4} \to0,
    \end{align*}
    then
    \begin{align*}
        \lim_{n \to \infty} \P\left[\ba'\btheta_0 \in \mathsf{CI}_{n,1-\alpha}^*(3^{1/d},0) \right] = 1 - \alpha.
    \end{align*}
    The range of allowable bandwidth sequences is wider on the small-bandwidth side than for the classical approach, and requires $d<8$. Relative to the classical method, bootstrap rescaling corrects the variance mismatch responsible for small-bandwidth conservativeness. As a result, $\mathsf{CF}_{n,1-\alpha}(3^{1/d},0)$ remains close to its nominal level throughout the admissible small-bandwidth region. Its limitation is on the large-bandwidth side: because the procedure is not debiased, coverage still deteriorates once the bias condition $n h_n^4\to 0$ fails.

    \item \textit{Small Bandwidth Debiased Method}. This corresponds to $(B,L)=(3^{1/d},L)$ for some $L\geq 1$. If Assumptions \ref{Assumption: Convergence of M}-\ref{Assumption: Bias of thetahat} hold and
    \begin{align*}
        n^2 h_n^d\to\infty
        \qquad\text{and}\qquad
        n h_n^{4(L+1)} \to 0,
    \end{align*}
    then
    \begin{align*}
        \lim_{n \to \infty} \P\left[\ba'\btheta_0 \in \mathsf{CI}_{n,1-\alpha}^*(3^{1/d},L) \right] = 1 - \alpha.
    \end{align*}
    The range of allowable bandwidth sequences is the widest among the four procedures, and requires $d<8(L+1)$. This method combines the two improvements described above. On the small-bandwidth side, bootstrap rescaling removes the variance mismatch that would otherwise induce conservativeness. On the large-bandwidth side, debiasing delays the onset of bias-driven undercoverage. Accordingly, $\mathsf{CF}_{n,1-\alpha}(3^{1/d},L)$ remains close to its nominal level over the broadest range of bandwidth sequences considered in this paper.
\end{itemize}

Figure \ref{fig:coverage-schematic} presents a schematic plot summarizing the main theoretical conclusions. Overall, the comparison suggests a clear practical recommendation: bootstrap inference based on small bandwidth and debiasing corrections provides the most robust coverage accuracy as a function of bandwidth choice. Thus, for empirical settings where bandwidth choice is uncertain or data-driven, this method provides the most reliable default.

\begin{figure}[!htbp]
\centering
\begin{tikzpicture}
\begin{axis}[
    width=13.8cm,
    height=8cm,
    xmin=0, xmax=1.02,
    ymin=0.02, ymax=1.05,
    axis lines=left,
    xlabel={$h_n$},
    xlabel style={at={(axis description cs:0.985,-0.025)}, anchor=north},
    ylabel={Coverage},
    ylabel style={at={(axis description cs:-0.035,0.5)}, anchor=center},
    xtick=\empty,
    ytick={0.8,1.0},
    yticklabels={$1-\alpha$,1},
    clip=false,
    samples=500
]

\def\xsmall{0.05}
\def\xclassical{0.4}
\def\xbiasa{0.55}   
\def\xbiasb{0.70}   

\addplot[dashed, gray!70, line width=0.7pt, domain=0:1.02] {0.8};

\addplot[dashed, gray!70, line width=0.7pt] coordinates {(\xsmall,0.02) (\xsmall,1.02)};
\addplot[dashed, gray!70, line width=0.7pt] coordinates {(\xclassical,0.02) (\xclassical,1.02)};
\addplot[dashed, gray!70, line width=0.7pt] coordinates {(\xbiasa,0.02) (\xbiasa,1.02)};
\addplot[dashed, gray!70, line width=0.7pt] coordinates {(\xbiasb,0.02) (\xbiasb,1.02)};

\addplot[
    color=orange!95!black,
    dashdotted,
    line width=1pt,
    domain=\xsmall:0.8
]
{0.792
 + 0.175*(1 - 1/(1+exp(-(x-0.31)/0.055)))
 - 0.845*(1/(1+exp(-(x-0.72)/0.050)))
};

\addplot[
    color=cyan!70!black,
    dashed,
    line width=1pt,
    domain=\xsmall:0.95
]
{0.807
 + 0.160*(1 - 1/(1+exp(-(x-0.295)/0.055)))
 - 0.820*(1/(1+exp(-(x-0.865)/0.048)))
};

\addplot[
    color=red!85!black,
    densely dotted,
    line width=1pt,
    domain=\xsmall:.8
]
{0.797
 + 0.008*exp(-((x-0.65)^2)/(0.15^2))
 - 0.845*(1/(1+exp(-(x-0.715)/0.050)))
};

\addplot[
    color=blue!85!black,
    solid,
    line width=1pt,
    domain=\xsmall:.96
]
{0.805
 + 0.004*exp(-((x-0.55)^2)/(0.19^2))
 - 0.820*(1/(1+exp(-(x-0.862)/0.048)))
};

\node[anchor=north, font=\scriptsize] at (axis cs:\xsmall,-0.01) {$n^2 h_n^d \to \infty$};
\node[anchor=north, font=\scriptsize] at (axis cs:\xclassical,-0.01) {$n h_n^d \to \infty$};
\node[anchor=north, font=\scriptsize] at (axis cs:\xbiasa,-0.01) {$n h_n^{4} \to 0$};
\node[anchor=north, font=\scriptsize] at (axis cs:\xbiasb,-0.01) {$n h_n^{8} \to 0$};

\draw[orange!95!black, dashdotted, line width=1pt]
    (axis cs:0.075,0.48) -- (axis cs:0.135,0.48);
\node[anchor=west, font=\tiny] at (axis cs:0.14,0.48)
    {$\mathsf{CF}_{n,1-\alpha}(1,0)$};

\draw[cyan!70!black, dashed, line width=1pt]
    (axis cs:0.075,0.42) -- (axis cs:0.135,0.42);
\node[anchor=west, font=\tiny] at (axis cs:0.14,0.42)
    {$\mathsf{CF}_{n,1-\alpha}(1,1)$};

\draw[red!85!black, densely dotted, line width=1pt]
    (axis cs:0.075,0.36) -- (axis cs:0.135,0.36);
\node[anchor=west, font=\tiny] at (axis cs:0.14,0.36)
    {$\mathsf{CF}_{n,1-\alpha}(3^{1/d},0)$};

\draw[blue!85!black, solid, line width=1pt]
    (axis cs:0.075,0.30) -- (axis cs:0.135,0.30);
\node[anchor=west, font=\tiny] at (axis cs:0.14,0.30)
    {$\mathsf{CF}_{n,1-\alpha}(3^{1/d},1)$};

\end{axis}
\end{tikzpicture}
\caption{Schematic coverage functions for four bootstrap confidence interval procedures.}
\label{fig:coverage-schematic}
\end{figure}

\section{Proofs and Other Technical Results}\label{Section: Proofs and Other Technical Results}

\subsection{A Useful Lemma}\label{Section: A Useful Lemma}

The following lemma is used in the proofs of Theorems \ref{Theorem: Asymptotic Distribution}-\ref{Theorem: Bootstrapping}.

\begin{lem}\label{Lemma: Existence and Convergence of theta(h)}
    Suppose that Assumption \ref{Assumption: Convergence of M} holds. Then $\argmin_{\btheta\in\Theta}M(\btheta;h)$ is non-empty for $h>0$ near zero and
    \begin{equation*}
        \btheta(h) - \btheta_0 = o(1) \qquad \text{as } h \downarrow 0.
    \end{equation*}
    If also Assumption \ref{Assumption: Asymptotic Distribution} holds, then $\E[\bs(\bz_1,\bz_2;\btheta(h) ) K_h(\bw_1-\bw_2)]=\0$ for $h>0$ near zero.
\end{lem}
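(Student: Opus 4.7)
The plan for Lemma \ref{Lemma: Existence and Convergence of theta(h)} is to exploit convexity of $\btheta \mapsto m(\bz_1,\bz_2;\btheta)$ (and hence of $M(\cdot;h)$ and $M_0$) throughout, and to tackle the three assertions in order: pointwise convergence of $M$, existence and consistency of $\btheta(h)$, and the first-order condition. First, I would establish $M(\btheta;h) \to M_0(\btheta)$ pointwise as $h \downarrow 0$ for each $\btheta \in \Theta$. Writing $M(\btheta;h) = \E[\mu(\bw_1,\bw_2;\btheta) K_h(\bw_1-\bw_2)]$ with $\mu(\bw_1,\bw_2;\btheta) = \E[m(\bz_1,\bz_2;\btheta)|\bw_1,\bw_2]$ and changing variables $\bu = (\bw_2 - \bw_1)/h$ recasts this as an integral of $\mu(\bw_1,\bw_1+h\bu;\btheta) f_{\bw}(\bw_1) f_{\bw}(\bw_1+h\bu) K(\bu)$. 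Pointwise convergence then follows from the dominated convergence theorem using the continuity condition in Assumption \ref{Assumption: Convergence of M}(iv) at $\bu=\0$, continuity of $f_{\bw}$, and the integrability bound in Assumption \ref{Assumption: Convergence of M}(iv) as dominating function.

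Existence and consistency then follow from a standard convexity-plus-uniqueness argument. A classical fact is that pointwise convergence of convex functions on an open convex set implies uniform convergence on compact subsets, which I apply on a neighborhood of $\btheta_0$. Since $\btheta_0 \in \mathrm{int}(\Theta)$ uniquely minimizes $M_0$, there exists a compact ball $B \subset \mathrm{int}(\Theta)$ around $\btheta_0$ with $\inf_{\btheta \in \partial B} M_0(\btheta) > M_0(\btheta_0)$. Uniform convergence on $B$ transfers this strict gap to $M(\cdot;h)$ for $h$ small, so any minimizer of $M(\cdot;h)$ over the compact $B$ (which exists by continuity) lies in the interior of $B$; convexity of $M(\cdot;h)$ on the convex $\Theta$ upgrades such a local minimizer to a global minimizer over $\Theta$, yielding non-emptiness of $\argmin_{\btheta\in\Theta} M(\btheta;h)$ and $\btheta(h) \to \btheta_0$.

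For the first-order condition, since $\btheta(h) \in \mathrm{int}(\Theta)$ for $h$ small, the one-sided directional derivative of the convex function $M(\cdot;h)$ at $\btheta(h)$ in every direction $\bt$ must be non-negative, hence also in $-\bt$. Plugging in the decomposition $(m(\bz_1,\bz_2;\btheta+\tau\bt) - m(\bz_1,\bz_2;\btheta))/\tau = \bs(\bz_1,\bz_2;\btheta)'\bt + r_{\bt}(\btheta,\tau)$ and dividing, the directional derivative equals $\E[\bs(\bz_1,\bz_2;\btheta(h))'\bt \, K_h(\bw_1-\bw_2)]$ provided the remainder term satisfies $\E[r_{\bt}(\btheta(h),\tau) K_h(\bw_1-\bw_2)] \to 0$ as $\tau \downarrow 0$. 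Since this linear-in-$\bt$ expression must vanish for all $\bt \in \R^k$, the asserted identity $\E[\bs(\bz_1,\bz_2;\btheta(h)) K_h(\bw_1-\bw_2)] = \0$ follows; integrability of the leading term is ensured by the dominating bound $\|\bs\| \leq b(\bz_1) b(\bz_2)$ in Assumption \ref{Assumption: Asymptotic Distribution}(ii).

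The main obstacle is the dominated convergence step for the $r_{\bt}$ term. By iterated expectations and independence of $\bz_1, \bz_2$, the quantity $\E[r_{\bt}(\btheta(h),\tau) K_h(\bw_1-\bw_2)]$ reduces to an integral of $\E[r_{\bt}(\btheta(h),\tau)|\bz_1,\bw_2]$ against $K_h(\bw_1-\bw_2) f_{\bw}(\bw_2) f_{\bz}(\bz_1)$; after the change of variables $\bu=(\bw_2-\bw_1)/h$ the $\bw_2$-integration becomes an integral against $K(\bu) f_{\bw}(\bw_1+h\bu)$. The pointwise a.s.\ limit in Assumption \ref{Assumption: Asymptotic Distribution}(i) gives convergence of the integrand to zero (fixing $h$ and sending $\tau \downarrow 0$ is allowed by the joint limit there), and the care is in combining the sup-bound $\E[\sup |\E[r_{\bt}|\bz_1,\bw_2]| f_{\bw}(\bw_2)^2]<\infty$ with the kernel weighting to produce a $\tau$-uniform integrable envelope — this is where the particular $f_{\bw}^2$ weighting chosen in Assumption \ref{Assumption: Asymptotic Distribution}(i) does its work.
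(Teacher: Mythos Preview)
Your proposal is correct and follows essentially the same route as the paper's proof: pointwise convergence $M(\btheta;h)\to M_0(\btheta)$ via a change of variables and dominated convergence, upgraded to local uniform convergence by convexity (the paper cites \cite{Hjort-Pollard_1993}, Lemma~1), then the compact-ball-plus-convexity argument for existence and consistency, and finally identification of the directional derivative of $M(\cdot;h)$ at the interior minimizer $\btheta(h)$ with $\E[\bs(\bz_1,\bz_2;\btheta(h)) K_h(\bw_1-\bw_2)]'\bt$ using Assumption~\ref{Assumption: Asymptotic Distribution}\ref{Assumption: Asymptotic Distribution - differentiability} to kill the remainder. The only cosmetic difference is that the paper writes out the convexity step explicitly (the $\eta=\epsilon/\|\btheta-\btheta_0\|$ inequality) rather than invoking ``local minimizer of a convex function is global,'' and it is terser about the remainder DCT step you flag as the main obstacle.
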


\begin{proof}
    For every $\btheta \in \Theta$,
    \begin{align*}
        M(\btheta;h) &= \int_\cW \int_{\R^d} \E[m(\bz_1,\bz_2;\btheta)|\bw_1=\bw,\bw_2=\bw-\bu h] f_\bw(\bw) f_\bw(\bw - \bu h) K(\bu) d\bu d\bw \\
        &\to \int_\cW \E[m(\bz_1,\bz_2;\btheta)|\bw_1=\bw,\bw_2=\bw] f_\bw(\bw)^2 d\bw = M_0(\btheta) \qquad \text{as } h \downarrow 0,
    \end{align*}
    the convergence being uniform on compact subsets of $\Theta$ because $\btheta \mapsto M(\btheta;h)$ is convex \citep[e.g.,][Lemma 1]{Hjort-Pollard_1993}.
    
    Take any $\epsilon > 0$ with $\Theta^\epsilon_0 = \{\btheta \in \R^k: \|\btheta-\btheta_0\| \leq \epsilon\} \subseteq \Theta$. By the preceding paragraph,
    \begin{equation*}
        \sup_{\btheta \in \Theta^\epsilon_0} |M(\btheta;h) - M_0(\btheta)| \leq \frac{1}{2} \left(\inf_{\btheta:\|\btheta-\btheta_0\|=\epsilon} M_0(\btheta) - M_0(\btheta_0)\right)
    \end{equation*}
    for $h>0$ near zero. For any such $h$ and any $\btheta \in \Theta \setminus \Theta^\epsilon_0$, we have
    \begin{equation*}
        \epsilon_{\btheta} = \frac{\epsilon}{\|\btheta-\btheta_0\|} \in (0,1),
    \end{equation*}
    and therefore, by convexity of $\btheta \mapsto M(\btheta;h)$,
    \begin{equation*}
        M(\epsilon_{\btheta} \btheta + (1-\epsilon_{\btheta}) \btheta_0;h) \leq \epsilon_{\btheta} M(\btheta;h) + (1-\epsilon_{\btheta}) M(\btheta_0;h),
    \end{equation*}
    which rearranges as
    \begin{equation*}
        M(\btheta;h) - M(\btheta_0;h)
        \geq \frac{1}{\epsilon_{\btheta}} [M(\epsilon_{\btheta} \btheta + (1-\epsilon_{\btheta}) \btheta_0;h) - M(\btheta_0;h)] \geq 0. 
    \end{equation*}
    As a consequence,
    \begin{equation*}
        \inf_{\btheta \in \Theta}M(\btheta;h) = \inf_{\btheta \in \Theta^\epsilon_0}M(\btheta;h) = \min_{\btheta \in \Theta^\epsilon_0}M(\btheta;h),
    \end{equation*}
    where the last equality uses continuity of $\btheta \mapsto M(\btheta;h)$ and compactness of $\Theta^\epsilon_0$.
    
    The above argument shows in particular that $\btheta(h) \in \Theta^\epsilon_0$ for $h>0$ near zero.

    If also Assumption \ref{Assumption: Asymptotic Distribution} holds, then, for $\btheta$ near $\btheta_0$, $h>0$ near zero, and for any $\bt \in \R^k$,
    \begin{align*}
        \left| \frac{M(\btheta+\bt\tau;h)-M(\btheta;h)}{\tau} - \E[\bs(\bz_1,\bz_2;\btheta) K_h(\bw_1-\bw_2)]' \bt \right| &\leq \left| \E[r_\bt(\btheta,\tau) K_h(\bw_1-\bw_2)] \right| \\
        &\to 0 \qquad \text{as } \tau \downarrow 0,
    \end{align*}
    implying that for $h>0$ near zero, $\btheta \mapsto M(\btheta;h)$ is (directionally) differentiable near $\btheta_0$, the directional derivative $\E[\bs(\bz_1,\bz_2;\btheta) K_h(\bw_1-\bw_2)]' \bt$ being zero when $\btheta = \btheta(h)$ because $\btheta(h)$ minimizes $M(\btheta;h).$ 
\end{proof}

\subsection{Proof of Theorems \ref{Theorem: Asymptotic Distribution} and \ref{Theorem: Generalized Jackknifing}}

Theorem \ref{Theorem: Asymptotic Distribution} can be obtained from Theorem \ref{Theorem: Generalized Jackknifing} by setting $L=0$ in the latter, so it suffices to prove Theorem \ref{Theorem: Generalized Jackknifing}. To do so, for $l \in\{0,\dots,L\}$, let $\widehat{\btheta}_{n,l}=\widehat{\btheta}_n(h_{n,l}),\btheta_{n,l}=\btheta(h_{n,l}),$ and
\begin{equation*}
    \widehat{\bU}_{n,l} = \binom{n}{2}^{-1} \sum_{i<j}  \bs^{\mu}_{n,l}(\bz_i,\bz_j), \qquad \bs^{\mu}_{n,l}(\bz_i,\bz_j)= \bs_{n,l}(\bz_i,\bz_j) - \E[\bs_{n,l}(\bz_1,\bz_2)],
\end{equation*}
where
\begin{equation*}
    \bs_{n,l}(\bz_i,\bz_j)=\bs(\bz_i,\bz_j;\btheta_{n,l})K_{h_{n,l}}(\bw_i-\bw_j).
\end{equation*}
By Lemma \ref{Lemma: Existence and Convergence of theta(h)}, $\lim_{n \to \infty}\btheta_{n,l}=\btheta_0$ and $\E[\bs_{n,l}(\bz_i,\bz_j)]=0$ for large $n$.

Suppose that
\begin{equation}\label{eq: Hjort-Pollard stochastic expansion}
    \widehat{\btheta}_{n,l}-\btheta_{n,l} = -\bGamma_0^{-1}\widehat{\bU}_{n,l} + o_{\P}(\rho_n^{-1}) \quad \text{for} \quad l \in\{0,\dots,L\}.
\end{equation}
Then
\begin{equation*}
    \widetilde{\btheta}_n-\bar{\btheta}_n = -\bGamma_0^{-1}\widetilde{\bU}_n + o_{\P}(\rho_n^{-1}),
\end{equation*}
where
\begin{equation*}
    \widetilde{\bU}_n = \sum_{l=0}^{L} \lambda_l(\bc)\widehat{\bU}_{n,l}
\end{equation*}
satisfies
\begin{equation}\label{eq: U-statistic asymptotic normality}
    \left[ n^{-1}\bSigma_0 + \binom{n}{2}^{-1}h_n^{-d}\bDelta_0(\bar{K})\right]^{-1/2} \widetilde{\bU}_n \rightsquigarrow \cN (\0_{k \times 1},\bI_k)
\end{equation}
because, letting $\sum_i$ denote $\sum_{i=1}^n$,
\begin{equation*}
    \widetilde{\bU}_n = \widetilde{\bL}_n + \widetilde{\bW}_n,
\end{equation*}
where
\begin{equation*}
    \widetilde{\bL}_n = \frac{2}{n} \sum_i \widetilde{\bell}_n(\bz_i), \qquad \widetilde{\bell}_n(\bz_i)=\sum_{l=0}^{L} \lambda_l(\bc) \E[\bs^{\mu}_{n,l}(\bz_i,\bz_j)|\bz_i] \qquad (j \neq i),
\end{equation*}
and
\begin{equation*}
    \widetilde{\bW}_n = \binom{n}{2}^{-1} \sum_{i<j} \widetilde{\bomega}_n(\bz_i,\bz_j), \qquad \widetilde{\bomega}_n(\bz_i,\bz_j)=\sum_{l=0}^{L} \lambda_l(\bc) \big[\bs^{\mu}_{n,l}(\bz_i,\bz_j) - \widetilde{\bell}_n(\bz_i) - \widetilde{\bell}_n(\bz_j)\big],
\end{equation*}
satisfy
\begin{equation*}
    \left(\begin{array}{c} \sqrt{n}\widetilde{\bL}_n \\  \sqrt{\binom{n}{2}h_n^d} \widetilde{\bW}_n \end{array}\right) \rightsquigarrow \cN\left( \left[\begin{array}{c} \0_{k \times 1} \\ \0_{k \times 1} \end{array}\right], \left[\begin{array}{cc} \bSigma_0 &\0_{k \times k} \\ \0_{k \times k} &\bDelta_0(\bar{K})  \end{array}\right] \right),
\end{equation*}
as can be shown by means of the Cram{\'e}r-Wold device and the central limit theorem of \cite{Heyde-Brown_1970}, the latter being applicable because it follows from routine calculations that for every $\bmu_1,\bmu_2 \in \R^k$, we have
\begin{equation*}
    \sigma_n^2 = \sum_i\V[g_{i,n}] = \bmu_1'\bSigma_0\bmu_1 + \bmu_2'\bDelta_0(\bar{K})\bmu_2 + o(1), 
\end{equation*}
\begin{equation*}
    \sum_i \E[g_{i,n}^4] = o(1),
\end{equation*}
and
\begin{equation*}
    \V\left[\sum_i\sigma_{i,n}^2-\sigma_n^2\right] = o(1),\qquad \sigma_{i,n}^2=\V[g_{i,n}|\bz_1,\dots,\bz_{i-1}],
\end{equation*}
where
\begin{equation*}
    g_{i,n} = g_{i,n}(\bmu) = \frac{2}{\sqrt{n}} \bmu_1'\widetilde{\bell}_n(\bz_i) + \sqrt{\binom{n}{2}^{-1}h_n^{d}}\sum_{j=1}^{i-1}\bmu_2'\widetilde{\bomega}_{n}(\bz_i,\bz_j).
\end{equation*}
The proof of Theorem \ref{Theorem: Generalized Jackknifing} can therefore be completed by verifying \eqref{eq: Hjort-Pollard stochastic expansion}.

To do so, we leverage convexity. For any $l \in \{0,\dots,L\}$ and any $\bt \in \R^k$, it can be shown that
\begin{equation*}
    \lim_{\tau \downarrow 0,h \downarrow 0,\btheta \to \btheta_0} \E \big[ \E[ r_{\bt}(\btheta,\tau) K_h(\bw_1-\bw_2) | \bz_1 ] ^2 \big] = 0
\end{equation*}
and
\begin{equation*}
    \lim_{\tau \downarrow 0,h \downarrow 0,\btheta \to \btheta_0} h^d \E[ r_{\bt}(\btheta,\tau)^2 K_h(\bw_1-\bw_2)^2 ] = 0,
\end{equation*}
and it therefore follows from a Hoeffding decomposition that
\begin{align*}
	&\rho_n^2\Big[\widehat{M}_n(\btheta_{n,l}+\bt \rho_n^{-1};h_{n,l}) - \widehat{M}_n(\btheta_{n,l};h_{n,l}) \Big] \\
    &=  \rho_n^2 [M(\btheta_{n,l}+\bt \rho_n^{-1};h_{n,l}) - M(\btheta_{n,l};h_{n,l})] + \bt'\rho_n\widehat{\bU}_{n,l} + o_\P(1).
\end{align*}
Moreover,
\begin{equation*}
    \rho_n^2 [M(\btheta_{n,l}+\bt \rho_n^{-1};h_{n,l}) - M(\btheta_{n,l};h_{n,l}) ] \to \frac{1}{2} \bt' \bGamma_0 \bt,
\end{equation*}
and proceeding as in the proof of \eqref{eq: U-statistic asymptotic normality} it can be shown that $\rho_n\widehat{\bU}_{n,l} = O_\P(1)$. Because $\bGamma_0$ is positive definite and because $\bt \mapsto \widehat{M}_n(\btheta_{n,l}+\bt \rho_n^{-1};h_{n,l})$ is convex (almost surely), the corollary following \citet[Lemma 2]{Hjort-Pollard_1993} implies that \eqref{eq: Hjort-Pollard stochastic expansion} holds.

\subsection{Proof of Theorem \ref{Theorem: Bootstrapping}}

The proof of Theorem \ref{Theorem: Bootstrapping} is a natural bootstrap analog of the proof of Theorem \ref{Theorem: Generalized Jackknifing}.

For $l \in\{0,\dots,L\}$, let $\widehat{\btheta}^*_{n,l}=\widehat{\btheta}^*_n(h_{n,l})$ and
\begin{equation*}
    \widehat{\bU}^*_{n,l} = \binom{n}{2}^{-1} \sum_{i<j}  \bs^{\mu,*}_{n,l}(\bz^*_{i,n},\bz^*_{j,n}), \qquad \bs^{\mu,*}_{n,l}(\bz^*_{i,n},\bz^*_{j,n})= \bs_{n,l}(\bz^*_{i,n},\bz^*_{j,n}) - \E^*_n[\bs_{n,l}(\bz^*_{1,n},\bz^*_{2,n})],
\end{equation*}
where $\E^*_n[\cdot]$ denotes $\E[\cdot|\bz_1,\dots,\bz_n]$.

It suffices to show that
\begin{equation}\label{eq: Hjort-Pollard stochastic expansion - bootstrap}
    \widehat{\btheta}^*_{n,l}-\btheta_{n,l} = -\bGamma_0^{-1}\big(\widehat{\bU}^*_{n,l} + \widehat{\bU}_{n,l} \big) + o_{\P}(\rho_n^{-1}) \quad \text{for} \quad l \in\{0,\dots,L\}
\end{equation}
and that
\begin{equation}\label{eq: U-statistic asymptotic normality - bootstrap}
    \left[ n^{-1}\bSigma_0 + 3 \binom{n}{2}^{-1}h_n^{-d}\bDelta_0(\bar{K})\right]^{-1/2} \widetilde{\bU}^*_n \rightsquigarrow_\P \cN (\0_{k \times 1},\bI_k),
\end{equation}
where
\begin{equation*}
    \widetilde{\bU}^*_n = \sum_{l=0}^{L} \lambda_l(\bc)\widehat{\bU}^*_{n,l}.
\end{equation*}

For every $\bt \in \R^k$, using a Hoeffding decomposition and the fact that $m(\bz,\bz;\btheta_{n,l})=0$ and $\bs(\bz,\bz;\btheta_{n,l})=\0$ for large $n$, we have  
\begin{align*}
	&\rho_n^2\Big[\widehat{M}^*_n(\btheta_{n,l}+\bt \rho_n^{-1};h_{n,l}) - \widehat{M}^*_n(\btheta_{n,l};h_{n,l}) \Big] \\
    &= \rho_n^2 (1+o(1)) [\widehat{M}_n(\btheta_{n,l}+\bt \rho_n^{-1};h_{n,l}) - \widehat{M}_n(\btheta_{n,l};h_{n,l}) ] + \bt'\rho_n\widehat{\bU}^*_{n,l} + o_\P(1),
\end{align*}
where it can be shown that $\rho_n\widehat{\bU}^*_{n,l}=O_\P(1)$ and where it follows from the proof of \eqref{eq: Hjort-Pollard stochastic expansion} that
\begin{equation*}
    \rho_n^2 \Big[\widehat{M}_n(\btheta_{n,l}+\bt \rho_n^{-1};h_{n,l}) - \widehat{M}_n(\btheta_{n,l};h_{n,l}) \Big] = \frac{1}{2}\bt'\bGamma_0\bt + \bt'\rho_n\widehat{\bU}_{n,l} + o_\P(1),
\end{equation*}
where $\rho_n\widehat{\bU}_{n,l}=O_\P(1)$. In other words,
\begin{align*}
    &\rho_n^2\Big[\widehat{M}^*_n(\btheta_{n,l}+\bt \rho_n^{-1};h_{n,l}) - \widehat{M}^*_n(\btheta_{n,l};h_{n,l}) \Big] \\
    &= \frac{1}{2}\bt'\bGamma_0\bt + \bt'\rho_n\big(\widehat{\bU}^*_{n,l} + \widehat{\bU}_{n,l}\big) + o_\P(1) \qquad \text{for every } \bt \in \R^k.
\end{align*}
Because $\bGamma_0$ is positive definite and because $\bt \mapsto \widehat{M}^*_n(\btheta_{n,l}+\bt \rho_n^{-1};h_{n,l})$ is convex (almost surely),  the corollary following \citet[Lemma 2]{Hjort-Pollard_1993} implies that \eqref{eq: Hjort-Pollard stochastic expansion - bootstrap} holds. 

To prove \eqref{eq: U-statistic asymptotic normality - bootstrap}, we begin by decomposing $\widetilde{\bU}^*_n$ as
\begin{equation*}
    \widetilde{\bU}^*_n = \widetilde{\bL}^*_n + \widetilde{\bW}^*_n,
\end{equation*}
where
\begin{equation*}
    \widetilde{\bL}^*_n = \frac{2}{n} \sum_i \widetilde{\bell}^*_n(\bz^*_{i,n}), \qquad \widetilde{\bell}^*_n(\bz^*_{i,n})=\sum_{l=0}^{L} \lambda_l(\bc) \E^*_n[\bs^{\mu,*}_{n,l}(\bz^*_{i,n},\bz^*_{j,n})|\bz^*_{i,n}] \qquad (j \neq i),
\end{equation*}
and
\begin{equation*}
    \widetilde{\bW}^*_n = \binom{n}{2}^{-1} \sum_{i<j} \widetilde{\bomega}^*_n(\bz^*_{i,n},\bz^*_{j,n}), \qquad \widetilde{\bomega}^*_n(\bz^*_{i,n},\bz^*_{j,n})=\sum_{l=0}^{L} \lambda_l(\bc) \big[\bs^{\mu,*}_{n,l}(\bz^*_{i,n},\bz^*_{j,n}) - \widetilde{\bell}^*_n(\bz^*_{i,n}) - \widetilde{\bell}^*_n(\bz^*_{j,n})\big].
\end{equation*}
Defining
\begin{equation*}
    \pi_n = \frac{\sqrt{nh_n^d}}{1+\sqrt{nh_n^d}},
\end{equation*}
routine calculations can be used to show that for every $\bmu_1,\bmu_2 \in \R^k$, we have
\begin{equation*}
    \widehat{\sigma}_n^2 = \sum_i\V^*_n[g^*_{i,n}] = \bmu_1'\big[\pi_n^2\bSigma_0 + 4(1-\pi_n)^2\bDelta_0(\bar{K})\big]\bmu_1 + \bmu_2'\bDelta_0(\bar{K})\bmu_2 + o_\P(1), 
\end{equation*}
\begin{equation*}
    \sum_i \E^*_n[g^{*4}_{i,n}] = o_\P(1),
\end{equation*}
and
\begin{equation*}
    \V^*_n\left[\sum_i\widehat{\sigma}_{i,n}^2-\widehat{\sigma}_n^2\right] = o_\P(1), \qquad \widehat{\sigma}_{i,n}^2=\V^*_n[g^*_{i,n}|\bz^*_{1,n},\dots,\bz^*_{i-1,n}],
\end{equation*}
where $\V^*_n[\cdot]$ denotes $\V[\cdot|\bz_1,\dots,\bz_n]$ and where 
\begin{equation*}
    g^*_{i,n} = g^*_{i,n}(\bmu) = \frac{\pi_n}{\sqrt{n}} 2\bmu_1'\widetilde{\bell}^*_n(\bz^*_{i,n}) + \sqrt{\binom{n}{2}^{-1}h_n^{d}}\sum_{j=1}^{i-1}\bmu_2'\widetilde{\bomega}^*_{n}(\bz^*_{i,n},\bz^*_{j,n}).
\end{equation*}
The Cram{\'e}r-Wold device and the central limit theorem of \cite{Heyde-Brown_1970} therefore imply that if $\pi_n \to \pi_0 \in [0,1]$, then
\begin{equation*}
    \left(\begin{array}{c} \sqrt{n}\pi_n\widetilde{\bL}^*_n \\  \sqrt{\binom{n}{2}h_n^d} \widetilde{\bW}^*_n \end{array}\right) \rightsquigarrow_\P \cN\left( \left[\begin{array}{c} \0_{k \times 1} \\ \0_{k \times 1} \end{array}\right], \left[\begin{array}{cc} \pi_0^2\bSigma_0 + 4(1-\pi_0)^2\bDelta_0(\bar{K}) &\0_{k \times k} \\ \0_{k \times k} &\bDelta_0(\bar{K})  \end{array}\right] \right).
\end{equation*}
Whether or not $\pi_n$ is convergent, the result \eqref{eq: U-statistic asymptotic normality - bootstrap} can be obtained from the preceding display by arguing along subsequences (if necessary).

\subsection{Verifying Assumption \ref{Assumption: Bias of thetahat}}\label{Section: Verifying Assumption 3}

It follows from Lemma \ref{Lemma: Existence and Convergence of theta(h)} that if Assumption \ref{Assumption: Convergence of M} holds, then so does Assumption \ref{Assumption: Bias of thetahat} with $\smooth=0$. This observation provides the base case for an induction argument. To describe the induction step, suppose that for some even $S \geq 0$, we have
\begin{equation*}
    \btheta(h) -\btheta_0 = \sum_{l=1}^{S/2} \bb_{2l} h^{2l} + o(h^{S}) \qquad \text{ as } h\downarrow 0.
\end{equation*}
Suppose also that, for every $\bt \in \R^k$ and some $\bbeta_{S+2} \in \R^k$, we have
\begin{align*}
    &h^{-(S+2)} \left[ M \left(\btheta_0 + \sum_{l=1}^{S/2} \bb_{2l} h^{2l} + \bt h^{S+2};h \right) - M \left(\btheta_0 + \sum_{l=1}^{S/2} \bb_{2l} h^{2l}; h  \right)\right] \\
    &= \bt' \bbeta_{S+2} + \frac{1}{2} \bt' \bGamma_0 \bt + o(1) \qquad \text{as } h \downarrow 0.
\end{align*}
Then, the corollary following \citet[Lemma 2]{Hjort-Pollard_1993} implies that
\begin{align*}
    h^{-(S+2)} \left( \btheta(h) - \btheta_0 - \sum_{l=1}^{S/2} \bb_{2l} h^{2l} \right) &= \argmin_{\bt \in \R^k} M \left(\btheta_0 + \sum_{l=1}^{S/2} \bb_{2l} h^{2l} + \bt h^{S+2};h \right) \\
    &= - \bGamma_0^{-1} \bbeta_{S+2} + o(1) \qquad \text{ as } h\downarrow 0;
\end{align*}
that is, defining $\bb_{S+2} = - \bGamma_0^{-1} \bbeta_{S+2}$, we have
\begin{equation*}
    \btheta(h) -\btheta_0 = \sum_{l=1}^{(S+2)/2} \bb_{2l} h^{2l} + o(h^{S+2}) \qquad \text{ as } h\downarrow 0.
\end{equation*}

To complete the proof of Proposition \ref{Proposition: Bias Expansion L=2}, it therefore suffices to note that (for every $\bt \in \R^k$ and) under the assumptions of the proposition, we have
\begin{equation*}
    h^{-4} \left[ M (\btheta_0 + \bt h^2;h ) - M (\btheta_0 ; h )\right]
    = \bt' \bbeta_2 + \frac{1}{2} \bt' \bGamma_0 \bt + o(1) \qquad \text{as } h \downarrow 0,
\end{equation*}
where
\begin{equation*}
    \bbeta_2 = \frac{1}{2}\sum_{i=1}^d\sum_{j=1}^d
    \left(\int_\cW
    \frac{\partial^2 \bvarphi(\bw,\bar{\bw})}{\partial \bar w_i \partial \bar w_j}\Big\vert_{\bar{\bw}=\bw}
    f_{\bw}(\bw)d\bw\right)
    \left(\int_{\R^d} u_i u_j K(\bu)d\bu\right).
\end{equation*}

Similarly, if in addition to the assumptions of Proposition \ref{Proposition: Bias Expansion L=2} it is assumed that for every $\bt \in \R^k$ and for some $\bbeta_4 \in \R^k$, we have
\begin{equation*}
    h^{-8} \left[ M (\btheta_0 + \bb_2 h^2 + \bt h^4;h ) - M (\btheta_0 + \bb_2 h^2; h )\right] = \bt' \bbeta_4 + \frac{1}{2} \bt' \bGamma_0 \bt + o(1) \qquad \text{as } h \downarrow 0,
\end{equation*}
then Assumption \ref{Assumption: Bias of thetahat} holds with $\smooth=4$. One set of sufficient conditions for this to occur is that Assumptions \ref{Assumption: Convergence of M}-\ref{Assumption: Asymptotic Distribution} hold and that, for every $\bt \in \R^k$, the following are satisfied (with probability one):

\begin{enumerate}[(i)]
    \item $\int_{\R^d} \|\bu\|^4 K(\bu)d\bu<\infty$.

    \item $\bar{\bw} \mapsto \bvarphi(\bw,\bar{\bw}) = \E[\bs(\bz_1,\bz_2;\btheta_0)|\bw_1=\bw,\bw_2=\bar{\bw}]f_{\bw}(\bar{\bw})$ is four times continuously differentiable with $\E[\sup_{\bar{\bw}\in\cW}\|\partial_{\bar{\bw}}^{\bnu} \bvarphi(\bw,\bar{\bw})\|]<\infty$ for all $\bnu\in\Z_+^d$ with $|\bnu|\leq 4$.

    \item $f_\bw$ is twice continuously differentiable and $\bar{\bw} \mapsto \bH(\bw,\bar{\bw};\btheta_0,\bt)$ is twice continuously differentiable with $\E[\sup_{\bar{\bw}\in\cW}\|\partial_{\bar{\bw}}^{\bnu} \bH(\bw,\bar{\bw};\btheta_0,\bt)f_\bw(\bar{\bw})\|]<\infty$ for all $\bnu\in\Z_+^d$ with $|\bnu|\leq 2$.

    \item For some function $\dot\bH(\bw,\bar{\bw};\btheta,\bt)\in\R^{k\times k}$, $\bar{\bw} \mapsto \dot\bH(\bw,\bar{\bw};\btheta_0,\bt)$ is continuous,
    \begin{equation*}
        \E\left[\sup_{\bar{\bw}\in\cW}\|\dot \bH(\bw,\bar{\bw};\btheta_0,\bt)f_\bw(\bar{\bw})\|\right]<\infty,
    \end{equation*}
    \begin{equation*}
        \lim_{\tau\downarrow0,(\btheta,\bu)\to(\btheta_0,\0)} \left\| \frac{\bH(\bw,\bw+\bu;\btheta+\tau\bt,\bt)-\bH(\bw,\bw+\bu;\btheta,\bt)}{\tau}-\dot\bH(\bw,\bw+\bu;\btheta,\bt) \right\|= 0,
    \end{equation*}
    and, for some $\delta>0$,
    \begin{equation*}
        \E\left[\sup_{ \tau\in (0,\delta),\|\btheta-\btheta_0\|<\delta, \bw_2 \in\cW} \left\| \frac{\bH(\bw_1,\bw_2;\btheta+\tau\bt,\bt)-\bH(\bw_1,\bw_2;\btheta,\bt)}{\tau}-\dot\bH(\bw_1,\bw_2;\btheta,\bt) \right\|\right]< \infty.
    \end{equation*}
\end{enumerate}

\section{Sufficient Conditions for Motivating Examples}\label{Section: Sufficient Conditions for Motivating Examples}

To demonstrate the plausibility of Assumptions \ref{Assumption: Convergence of M} and \ref{Assumption: Asymptotic Distribution}, we revisit the examples of Section \ref{Section: Motivating Examples}. In each example, Assumption \ref{Assumption: Convergence of M}\ref{Assumption: Convergence of M - convexity} holds and Assumption \ref{Assumption: Convergence of M}\ref{Assumption: Convergence of M - density of w} is fairly primitive, so we focus on giving primitive sufficient conditions for Assumptions \ref{Assumption: Convergence of M}\ref{Assumption: Convergence of M - well defined Mn}-\ref{Assumption: Convergence of M - well behaved M0} and \ref{Assumption: Asymptotic Distribution}.

\subsection{Partially Linear Regression Model}
We take $\bs=\bs_{\mathtt{PLR}}$ and $\bH=\bH_{\mathtt{PLR}}$, where
\begin{equation*}
    \bs_{\mathtt{PLR}}(\bz_i,\bz_j;\btheta) = \frac{\partial}{\partial \btheta} m_{\mathtt{PLR}}(\bz_i,\bz_j;\btheta) = -\dot{\bx}_{i,j} (\dot{y}_{i,j}-\dot{\bx}_{i,j}'\btheta)
\end{equation*}
and
\begin{align*}
    \bH_{\mathtt{PLR}}(\bw_i,\bw_j)
    &= \frac{\partial^2}{\partial \btheta \partial \btheta'} \E[ m_{\mathtt{PLR}}(\bz_i,\bz_j;\btheta) | \bw_i,\bw_j]
     = \frac{\partial}{\partial \btheta'} \E[ \bs_{\mathtt{PLR}}(\bz_i,\bz_j;\btheta) | \bw_i,\bw_j] \\
    &= \E[\dot{\bx}_{i,j}\dot{\bx}_{i,j}' | \bw_i,\bw_j],
\end{align*}
the latter depending on neither $\btheta$ nor $\bt$ (because $m_{\mathtt{PLR}}(\bz_i,\bz_j;\btheta)$ is quadratic in $\btheta$).

Under mild conditions, Assumptions \ref{Assumption: Convergence of M}\ref{Assumption: Convergence of M - well defined Mn}-\ref{Assumption: Convergence of M - well behaved M0} and \ref{Assumption: Asymptotic Distribution} hold with
\begin{align*}
    \bxi_0(\bz)
    &= -2\E[\bs_{\mathtt{PLR}}(\bz_1,\bz_2;\btheta_0)|\bz_1=\bz,\bw_2=\bw ] f_{\bw}(\bw) = 2 (\bx-\E[\bx|\bw]) f_{\bw}(\bw)\varepsilon,
\end{align*}
\begin{align*}
    \bXi_0(\bw)
    &= \E[\bs_{\mathtt{PLR}}(\bz_1,\bz_2;\btheta_0)\bs_{\mathtt{PLR}}(\bz_1,\bz_2;\btheta_0)'|\bw_1=\bw,\bw_2=\bw ] f_{\bw}(\bw) \\
    &= \E[ \dot{\bx}_{1,2}\dot{\bx}_{1,2}'(\varepsilon_1^2+\varepsilon_2^2)|\bw_1=\bw,\bw_2=\bw ] f_{\bw}(\bw),
\end{align*}
and
\begin{align*}
    \bG_0(\bw) &= \bH_{\mathtt{PLR}}(\bw,\bw) f_{\bw}(\bw) = 2 \V[\bx|\bw] f_{\bw}(\bw).
\end{align*}

For instance, it suffices to set $b(\bz) = (1+\|\bx\|) (1+\|\bx\|+|\gamma_0(\bw)|+|\varepsilon|)$ and to assume that
\begin{enumerate}[(i)]
    \item The functions $\bw\mapsto \gamma_0(\bw)$, $\bw\mapsto \E[\bx|\bw]$, $\bw\mapsto \E[\bx\bx'|\bw]$, $\bw\mapsto \E[\varepsilon^2|\bw]$, $\bw\mapsto \E[\bx\varepsilon^2|\bw]$, and $\bw\mapsto \E[\bx\bx'\varepsilon^2|\bw]$ are continuous on $\cW$.

    \item $\E[(1+\|\bx\|^4)\varepsilon^4] + \sup_{\bw\in\cW}\E[(1+\|\bx\|^4)\varepsilon^4 |\bw ] f_{\bw}(\bw)<\infty$ and
    \begin{equation*}
        \E[(1+\|\bx\|^4)\gamma_0(\bw)^4 + \|\bx\|^8] + \sup_{\bw\in\cW}\E[(1+\|\bx\|^4)\gamma_0(\bw)^4 + \|\bx\|^8|\bw]f_{\bw}(\bw)<\infty.
    \end{equation*}

    \item With probability one, $\V[\bx|\bw]$ is positive definite and $\V[\varepsilon|\bx,\bw]>0$.
\end{enumerate}

Under the additional assumptions of Theorem \ref{Theorem: Asymptotic Distribution} and if $nh_n^d \to \infty$, the pairwise difference estimator is asymptotically linear with influence function
\begin{equation*}
    \bz \mapsto \E[\V[\bx|\bw]f_{\bw}(\bw)]^{-1} (\bx-\E[\bx|\bw]) f_{\bw}(\bw)\varepsilon.
\end{equation*}
Unless the distribution of $\bw$ is uniform on $\cW$, the pairwise difference estimator is therefore asymptotically distinct from the estimators studied by \cite{Robinson_1988_ECMA} and \cite{Newey-Donald_1994_JMA}, the influence function of these estimators being given by
\begin{equation*}
    \bz \mapsto \E[\V[\bx|\bw]]^{-1} (\bx-\E[\bx|\bw]) \varepsilon.
\end{equation*}

When $nh_n^d \to 2c<\infty$, the pairwise difference estimator is still $\sqrt{n}$-normal, but it ceases to be asymptotically linear. Similarly, the estimator of \cite{Newey-Donald_1994_JMA} is $\sqrt{n}$-normal, but not asymptotically linear, when the associated tuning parameter is chosen appropriately \citep{Cattaneo-Jansson-Newey_2018_ET}, and in light of \cite{Linton_1995_ECMA} it stands to reason that the same is true for the estimator of \cite{Robinson_1988_ECMA}. Another $\sqrt{n}$-normal estimator that is not asymptotically linear was proposed by \cite{Yatchew_1997_EL}. However, even under simplifying assumptions (e.g., uniformity of the distribution of $\bw$ and/or conditional homoskedasticity of $\varepsilon$), it appears difficult to make insightful comparisons between the various estimators just mentioned.

\subsection{Partially Linear Logit Model}
We take $\bs=\bs_{\mathtt{PLL}}$ and $\bH=\bH_{\mathtt{PLL}}$, where
\begin{equation*}
    \bs_{\mathtt{PLL}}(\bz_i,\bz_j;\btheta) = \frac{\partial}{\partial \btheta} m_{\mathtt{PLL}}(\bz_i,\bz_j;\btheta) = -\dot{\bx}_{i,j} (y_i - \Lambda(\dot{\bx}_{i,j}'\btheta))\1\{\dot{y}_{i,j}\neq0\}
\end{equation*}
and, defining $\lambda(u) = \partial \Lambda(u)/\partial u = \exp(u)/[1+\exp(u)]^2$,
\begin{align*}
    \bH_{\mathtt{PLL}}(\bw_i,\bw_j;\btheta) &= \frac{\partial^2}{\partial \btheta \partial \btheta'} \E[ m_{\mathtt{PLL}}(\bz_i,\bz_j;\btheta) | \bw_i,\bw_j] = \frac{\partial}{\partial \btheta'} \E[ \bs_{\mathtt{PLL}}(\bz_i,\bz_j;\btheta) | \bw_i,\bw_j] \\
    &= \E[ \dot{\bx}_{i,j}\dot{\bx}_{i,j}' \lambda(\dot{\bx}_{i,j}'\btheta)\1\{\dot{y}_{i,j}\neq0\}| \bw_i,\bw_j],
\end{align*}
where the latter does not depend on $\bt$ (because $m_{\mathtt{PLL}}(\bz_i,\bz_j;\btheta)$ is twice differentiable in $\btheta$).

Under mild conditions, Assumptions \ref{Assumption: Convergence of M}\ref{Assumption: Convergence of M - well defined Mn}-\ref{Assumption: Convergence of M - well behaved M0} and \ref{Assumption: Asymptotic Distribution} hold with
\begin{align*}
    \bxi_0(\bz)
    &= -2\E[\bs_{\mathtt{PLL}}(\bz_1,\bz_2;\btheta_0)|\bz_1=\bz,\bw_2=\bw ] f_{\bw}(\bw),
\end{align*}
\begin{align*}
    \bXi_0(\bw)
    &= \E[\bs_{\mathtt{PLL}}(\bz_1,\bz_2;\btheta_0)\bs_{\mathtt{PLL}}(\bz_1,\bz_2;\btheta_0)'|\bw_1=\bw,\bw_2=\bw ] f_{\bw}(\bw),
\end{align*}
and
\begin{align*}
    \bG_0(\bw) &= \bH_{\mathtt{PLL}}(\bw,\bw;\btheta_0) f_{\bw}(\bw).
\end{align*}

For instance, it suffices to set $b(\bz) = 1+\|\bx\|$ and to assume that, for some $\delta>0$,
\begin{enumerate}[(i)]
    \item The function $\bw\mapsto \gamma_0(\bw)$ is continuous on $\cW$. Also, the conditional distribution of $\bx$ given $\bw$ admits a density $f_{\bx|\bw}$ with respect to some measure $\rho$ such that $\bw\mapsto f_{\bx|\bw}(\bx|\bw)$ is continuous on $\cW$ (with probability one) and
    \begin{equation*}
        \int_{\R^k} (1+ \|\bx\|^2) \sup_{\|\bu\|\leq\delta} f_{\bx|\bw}(\bx| \bw + \bu)d\rho(\bx)<\infty \qquad \text{for every } \bw\in\cW. 
    \end{equation*}
    
    \item $\E[\|\bx\|^4] + \sup_{\bw\in\cW}\E[\|\bx\|^4 |\bw]f_{\bw}(\bw)<\infty.$

    \item With probability one, $\V[\bx|\bw]$ is positive definite.
\end{enumerate}

\subsection{Partially Linear Tobit Model}
We take $\bs=\bs_{\mathtt{PLT}}$ and $\bH=\bH_{\mathtt{PLT}}$, where
\begin{equation*}
    \bs_{\mathtt{PLT}}(\bz_i,\bz_j;\btheta) = -\dot{\bx}_{i,j} \left(\1\{y_i>\max(y_j+\dot{\bx}_{i,j}'\btheta,0)\} - \1\{y_j>\max(y_i-\dot{\bx}_{i,j}'\btheta,0)\} \right)
\end{equation*}
and
\begin{align*}
    \bH_{\mathtt{PLT}}(\bw_i,\bw_j;\btheta,\bt)
    &= \E[ \dot{\bx}_{i,j}\dot{\bx}_{i,j}' \left(\1\{\dot{\bx}_{i,j}'\btheta>0\}+\1\{\dot{\bx}_{i,j}'\btheta=0,\dot{\bx}_{i,j}'\bt\geq0\}\right)\eta_{\mathtt{PLT}}(\bz_i,\bz_j;\btheta)| \bw_i,\bw_j]  \\
    &\quad + \E[ \dot{\bx}_{i,j}\dot{\bx}_{i,j}' \left(\1\{\dot{\bx}_{i,j}'\btheta<0\}+\1\{\dot{\bx}_{i,j}'\btheta=0,\dot{\bx}_{i,j}'\bt<0\}\right)\eta_{\mathtt{PLT}}(\bz_j,\bz_i;\btheta)| \bw_i,\bw_j],
\end{align*}
with
\begin{align*}
    \eta_{\mathtt{PLT}}(\bz_i,\bz_j;\btheta) &= 2\int_0^{\infty} f_{\varepsilon|\bw}(\varepsilon-\bx_i'\btheta_0-\gamma_0(\bw_i)+\dot{\bx}_{i,j}'\btheta|\bw_i) f_{\varepsilon|\bw}(\varepsilon-\bx_j'\btheta_0-\gamma_0(\bw_j)|\bw_j)d\varepsilon  \\
    &\quad + f_{\varepsilon|\bw}(-\bx_i'\btheta_0-\gamma_0(\bw_i)+\dot{\bx}_{i,j}'\btheta|\bw_i) \int_{-\infty}^0  f_{\varepsilon|\bw}(\varepsilon-\bx_j'\btheta_0-\gamma_0(\bw_j)|\bw_j)d\varepsilon.
\end{align*}

Under mild conditions, Assumptions \ref{Assumption: Convergence of M}\ref{Assumption: Convergence of M - well defined Mn}-\ref{Assumption: Convergence of M - well behaved M0} and \ref{Assumption: Asymptotic Distribution} hold with
\begin{align*}
    \bxi_0(\bz)
    &= -2\E[\bs_{\mathtt{PLT}}(\bz_1,\bz_2;\btheta_0)|\bz_1=\bz,\bw_2=\bw ] f_{\bw}(\bw),
\end{align*}
\begin{align*}
    \bXi_0(\bw)
    &= \E[\bs_{\mathtt{PLT}}(\bz_1,\bz_2;\btheta_0)\bs_{\mathtt{PLT}}(\bz_1,\bz_2;\btheta_0)'|\bw_1=\bw,\bw_2=\bw ] f_{\bw}(\bw),
\end{align*}
and
\begin{align*}
    \bG_0(\bw) &= \bH_{\mathtt{PLT}}(\bw,\bw;\btheta_0,\btheta_0) f_{\bw}(\bw).
\end{align*}

For instance, it suffices to set $b(\bz) = 1+\|\bx\|$ and to assume that, for some $\delta>0$,
\begin{enumerate}[(i)]
    \item The function $\bw\mapsto \gamma_0(\bw)$ is continuous on $\cW$. Also, the conditional distribution of $\bx$ given $\bw$ admits a density $f_{\bx|\bw}$ with respect to some measure $\rho$ such that $\bw\mapsto f_{\bx|\bw}(\bx|\bw)$ is continuous on $\cW$ (with probability one) and
    \begin{equation*}
        \int_{\R^k} (1+ \|\bx\|^2) \sup_{\|\bu\|\leq\delta} f_{\bx|\bw}(\bx| \bw + \bu)d\rho(\bx)<\infty \qquad \text{for every } \bw\in\cW. 
    \end{equation*}
    In addition, the function $(\varepsilon,\bw)\mapsto f_{\varepsilon|\bw}(\varepsilon|\bw)$ is continuous and bounded and the function
    \begin{equation*}
        (\bx,\bw)\mapsto \int_\R \sup_{|u|+\|\bu\|\leq\delta}f_{\varepsilon|\bw}(\varepsilon-\bx'\btheta_0-\gamma_0(\bw)+u|\bw+\bu)d\varepsilon
    \end{equation*}
    is bounded.
    
    \item $\E[\|\bx\|^4] + \sup_{\bw\in\cW}(1+\E[\|\bx\|^4 |\bw])f_{\bw}(\bw)<\infty.$

    \item With probability one, $\V[\bx|\bw]$ is positive definite.
\end{enumerate}

\section{Simulation Evidence}\label{Section: Simulation Evidence}

We present simulation evidence for the partially linear regression and partially linear logit models. We compare the four bootstrap-based confidence intervals discussed in Section \ref{Section: Discussion}: $\mathsf{CI}_{n,1-\alpha}^*(1,0)$, $\mathsf{CI}_{n,1-\alpha}^*(1,L)$, $\mathsf{CI}_{n,1-\alpha}^*(3^{1/d},0)$, and $\mathsf{CI}_{n,1-\alpha}^*(3^{1/d},L)$. We set $\alpha=0.05$, and for the debiasing procedure, we use $L=1$ and $\bc=(1,2)'$.

\subsection{Simulation Design: Partially Linear Regression Model}

We base the partially linear regression designs on equations (6.1)--(6.2) of \cite{Robinson_1988_ECMA}. We consider
\begin{equation*}
    y_i = x_i + \gamma_0(\bw_i) + \varepsilon_i,
\end{equation*}
The three designs are as follows. In Model 1, $x_i=2v_i$, $v_i\sim\mathrm{N}(0,1)$, $w_i=v_i+u_i$, $u_i\sim \mathrm{N}(0,2)$, $\gamma_0(w)=w^2+1$, $\varepsilon_i\sim\mathrm{N}(0,1)$, and $\{v_i,u_i,\varepsilon_i\}$ are jointly independent. In Model 2, $d=\dim(\bw)=2$, $\gamma_0(\bw)=\bw'\bw$, $(x_i,\bw_i')'$ is normal with each component having mean $1$ and variance $3$, and any pair in $(x_i,\bw_i')'$ has correlation $2/3$. Model 3 uses the same joint normal specification for $(x_i,\bw_i')'$ as Model 2, but with $d=\dim(\bw)=3$ and $\gamma_0(\bw)=\bw'\bw-(1+\dim(\bw))$.
In all models, $\theta_0=1$.
The regressor of interest $x_i$ is correlated with $\bw_i$, and thus, ignoring the $\gamma_0(\bw_i)$ term will induce bias in the estimator of $\theta_0$.

\subsection{Simulation Design: Partially Linear Logit Model}

We base the partially linear logit designs on those of \cite{Honore-Powell_2005_Festschrift}. As in the partially linear regression simulations, the three designs differ mainly in the dimension of $\bw$. We consider
\begin{equation*}
    y_i = \1\big\{ x_{1i} + x_{2i} + \gamma_0(\bw_i) + \varepsilon_i\geq 0 \big\},\qquad \gamma_0(\bw) = \bw'\bw-(1+\dim(\bw)),
\end{equation*}
where the CDF of $\varepsilon_i$ is $\Lambda(u)=1/(1+\exp(-u))$, $x_{2i}$ has a discrete distribution with $\P[x_{2i}=1]=1/2=\P[x_{2i}=-1]$, $x_{1i}=v_i + \bw_i'\bw_i$ with $v_i\sim \mathrm{N}(0,1)$, and $\{\varepsilon_i,\bw_i,x_{2i},v_i\}$ are jointly independent. 
For Model $d$ ($d\in\{1,2,3\}$), $\dim(\bw)=d$, and $\bw_i$ has a normal distribution with each element having mean zero and variance one, equicorrelated with correlation $0.2$.
In all models, $\btheta_0=(1, 1)'$.

As noted by \cite{Honore-Powell_2005_Festschrift}, ignoring the presence of $\gamma_0(\bw_i)$ induces bias in the estimator of $\btheta_0$, although the bias for the second element tends to be negligible relative to the standard error. We focus on constructing $95\%$ confidence intervals for the first element of $\btheta_0$. 

\subsection{Simulation Results}

Tables~\ref{Table: plr d=1}--\ref{Table: plr d=3} display the coverage
probabilities and average lengths of the four confidence intervals for the
partially linear regression model, and Tables~\ref{Table: logit d=1}--\ref{Table: logit d=3}
for the partially linear logit model. Figure~\ref{fig:coverage} complements the
tables by plotting coverage probability as a function of bandwidth $h$ for all six DGPs. For each
data generating process, the sample size was $n=2{,}000$, we conducted $2{,}000$ simulation replications, and
for each replication we drew $2{,}000$ bootstrap samples to compute bootstrap quantiles. 

The panel headed ``$B=1$'' refers to $\mathsf{CI}_{n,0.95}^*(1,L)$, in which the bandwidth used to compute
bootstrapped quantiles equals the bandwidth used for estimation. The panel headed
``$B=3^{1/d}$'' refers to our proposed methods $\mathsf{CI}_{n,0.95}^*(3^{1/d},L)$, in which the bandwidth for bootstrap estimation is
rescaled by the factor $3^{1/d}$. Within each panel, $L=0$ is based on
the estimator $\widehat{\btheta}_n$ and $L=1$ is based on the debiased
estimator $\widetilde{\btheta}_n$ with $\bc=(1, 2)'$.

To form the bandwidth grid, we first estimate $h_{L=0}$ and $h_{L=1}$ by
auxiliary simulations. These are the MSE-minimizing bandwidths for
$\widehat{\btheta}_n$ and $\widetilde{\btheta}_n$, respectively. The grid is
then taken as the union of $h_{L=0}\times\{0.5,\dots,1.5\}$ and
$h_{L=1}\times\{0.5,\dots,1.5\}$. Rows highlighted in yellow and
orange in the tables mark $h_{L=0}$ and $h_{L=1}$, respectively. 

The results are broadly consistent with our theoretical predictions. We focus on
two forces of particular interest: small bandwidth regions, where the standard
bootstrap variance may be inflated, and large bandwidth regions, where smoothing
bias may become non-negligible. The simulations also include one instructive
qualification: in the partially linear regression model with $d=1$, the
conventional procedures are already very well calibrated, while our proposed
procedure $\mathsf{CI}_{n,0.95}^*(3^{1/d},1)$ is conservative. This is a
specialized design in which the smoothing bias targeted by jackknife debiasing is
negligible, so bandwidth rescaling and debiasing mostly widen the intervals without
offsetting a meaningful bias. This qualification is specific to the PLR $d=1$
design and does not describe the partially linear logit model with $d=1$, where
uncorrected intervals exhibit undercoverage as the bandwidth grows and bias
correction is beneficial.

\paragraph{Small-bandwidth overcoverage.}
For small bandwidths, the confidence intervals $\mathsf{CI}_{n,0.95}^*(1,L)$ tend to overcover the nominal $95\%$
level, whereas our proposed methods $\mathsf{CI}_{n,0.95}^*(3^{1/d},L)$ achieve coverage close to the nominal
probability.  This observation is predicted by our
small-bandwidth asymptotics: using the same $h_n$ for both estimation
and the bootstrap quantile computation causes the bootstrap to overestimate
the sampling dispersion of $\widehat{\btheta}_n$.  The average interval lengths in the tables are in line with this prediction.  
At small bandwidths, $\mathsf{CI}_{n,0.95}^*(1,L)$ are systematically longer
than their $\mathsf{CI}_{n,0.95}^*(3^{1/d},L)$ counterparts.

The severity of this distortion grows sharply with $d=\dim(\bw)$, consistent
with our theoretical prediction that the larger $d$, the wider the small-bandwidth region (based on the condition $\liminf_{n\to\infty}nh_n^d<\infty$).  
For the partially linear regression model with $d=2$ and $d=3$,
the coverage probabilities for $\mathsf{CI}_{n,0.95}^*(1,L)$ are well above the nominal level for smaller bandwidths,
whereas our methods $\mathsf{CI}_{n,0.95}^*(3^{1/d},L)$ achieve the nominal level (Tables~\ref{Table: plr d=2} and \ref{Table: plr d=3}). 
The same small-bandwidth variance distortion is visible in the logit model.
Overcoverage under $\mathsf{CI}_{n,0.95}^*(1,0)$ is modest for $d=1$, rises to $0.982$ for $d=2$,
and reaches $0.996$ for $d=3$ at the smallest bandwidth used in the simulation (Tables~\ref{Table: logit d=1}-\ref{Table: logit d=3}),
with analogous patterns for $\mathsf{CI}_{n,0.95}^*(1,1)$.
Figure~\ref{fig:coverage} makes this dimension-dependence visually apparent:
the coverage curves are well above the nominal level (the horizontal dotted line) for $d>1$.

\paragraph{Large-bandwidth undercoverage and bias correction.}
At larger bandwidths, except in the specialized PLR $d=1$ design discussed above, the bias may become non-negligible,
and $\mathsf{CI}_{n,0.95}^*(B,0)$ tends to exhibit severe undercoverage.  For
the partially linear regression model with $d=2$, the coverage of
$\mathsf{CI}_{n,0.95}^*(1,0)$ falls to $0.468$ at
$h=0.75$ and to $0.007$ at $h=1.12$.  For $d=3$, it falls to $0.051$
at $h=1.00$ and to essentially zero for $h\geq1.20$.  The undercoverage
is similarly pronounced in the logit model: for $d=2$, coverage collapses to
$0.858$ at $h=0.68$, and for $d=3$, it falls to $0.720$ at $h=0.75$.

The debiased intervals correct this distortion effectively. At larger bandwidths,
the $(B=1,L=1)$ procedure is closer to the nominal $95\%$ level than the
$(B=1,L=0)$ procedure: for the partially linear regression model with $d=2$,
coverage is $0.965$ at $h_{L=1}=0.75$, and for $d=3$ it is $0.970$ at
$h_{L=1}=1.00$. Similar patterns hold for the logit model, where coverage is
$0.963$ at $h_{L=1}=0.45$ for $d=2$ and $0.976$ at $h_{L=1}=0.50$ for $d=3$.

\paragraph{Combining bandwidth rescaling and bias correction.}

Under the procedure that combines bandwidth rescaling and bias correction, $\mathsf{CI}_{n,0.95}^*(3^{1/d},1)$,
coverage remains close to the nominal level across a wide bandwidth range.
For the partially linear regression model with $d=2$, 
the coverage probabilities lie between $0.949$ and $0.959$ across all bandwidths on the grid.
For $d=3$, coverage ranges from $0.950$ to $0.970$ over most of the grid and
remains at $0.955$ even at $h=1.50$.  The cost of bias correction is a modest
increase in average interval length relative to the uncorrected $L=0$
intervals: comparing the $(B=3^{1/d},L=0)$ and $(B=3^{1/d},L=1)$ columns at
the MSE-optimal bandwidth $h_{L=0}$, the debiased intervals are roughly
$4\%$ wider for $d=2$ and roughly $11\%$ wider for $d=3$ in the partially linear regression
model, a small price given the improvement in coverage reliability for larger bandwidths.
For the logit model, the qualitative picture is the same for $d=2$ and $d=3$.
For $d=1$, the proposed method is not conservative in the same way as in the PLR
case; instead, the main issue is the emergence of bias as the bandwidth grows,
which bias correction partly offsets. Under the proposed procedure, coverage
remains close to the nominal $95\%$ across a wide range of bandwidths considered.

Overall, the simulation evidence supports combining bandwidth rescaling ($B=3^{1/d}$) with jackknife-based bias correction ($L=1$), especially when smoothing bias is present or when the dimension of the nonparametric component is moderate.
The PLR $d=1$ design provides a useful reminder that when conventional methods are already well calibrated and the relevant smoothing bias is essentially absent, the robust procedure can be conservative.

\section{Conclusion}\label{Section: Conclusion}

This paper develops bandwidth-robust distribution theory and bootstrap-based inference procedures for a broad class of convex pairwise difference estimators. Our theoretical work is based on small bandwidth asymptotics and carefully leverages convexity. We illustrate the theory with three prominent examples. In addition to expanding the scope of small bandwidth asymptotics, our results lay the groundwork for several promising avenues of future research. First, our methods could be generalized to develop bandwidth selection based on higher-order stochastic expansions. Second, consistent variance estimators could be developed as an alternative to bootstrap-based inference. Third, it would be of interest to expand our theory to allow for pairwise difference estimators based on generated regressors, a class of estimators that sometimes arises in the context of control function and related econometric methods. Finally, it seems plausible that there exist settings where the objective function is sufficiently non-smooth to result in non-Gaussian distributional approximations. We leave these extensions for future work.

\begin{table}[hbtp]\renewcommand{\arraystretch}{1.2}
	\caption{Bootstrap 95\% Confidence Intervals for Partially Linear Regression Model: DGP 1 ($d=1$).}
	\label{Table: plr d=1}
	{\resizebox{\columnwidth}{!}{\begin{tabular}{r cccc cccc}
\toprule
& \multicolumn{4}{c}{$B = 1$} & \multicolumn{4}{c}{$B = 3^{1/d}$} \\
\cmidrule(lr){2-5}\cmidrule(lr){6-9}
& \multicolumn{2}{c}{$L = 0$} & \multicolumn{2}{c}{$L = 1$} & \multicolumn{2}{c}{$L = 0$} & \multicolumn{2}{c}{$L = 1$} \\
\cmidrule(lr){2-3}\cmidrule(lr){4-5}\cmidrule(lr){6-7}\cmidrule(lr){8-9}
$h$ & Coverage & Length & Coverage & Length & Coverage & Length & Coverage & Length \\
\midrule
0.35 & 0.958 & 0.058 & 0.958 & 0.059 & 0.956 & 0.058 & 0.956 & 0.058 \\
0.42 & 0.958 & 0.058 & 0.958 & 0.058 & 0.958 & 0.059 & 0.955 & 0.058 \\
0.49 & 0.956 & 0.058 & 0.958 & 0.058 & 0.963 & 0.060 & 0.955 & 0.058 \\
0.50 & 0.956 & 0.058 & 0.958 & 0.058 & 0.963 & 0.060 & 0.956 & 0.058 \\
0.56 & 0.955 & 0.058 & 0.957 & 0.058 & 0.966 & 0.061 & 0.958 & 0.058 \\
0.60 & 0.954 & 0.058 & 0.956 & 0.058 & 0.969 & 0.063 & 0.959 & 0.059 \\
0.63 & 0.953 & 0.058 & 0.956 & 0.058 & 0.970 & 0.064 & 0.961 & 0.059 \\
\rowcolor{yellow}0.70 & 0.954 & 0.058 & 0.956 & 0.058 & 0.977 & 0.067 & 0.965 & 0.061 \\
0.77 & 0.955 & 0.058 & 0.956 & 0.058 & 0.986 & 0.070 & 0.970 & 0.063 \\
0.80 & 0.956 & 0.058 & 0.955 & 0.058 & 0.987 & 0.072 & 0.972 & 0.064 \\
0.84 & 0.956 & 0.058 & 0.956 & 0.058 & 0.990 & 0.074 & 0.976 & 0.066 \\
0.90 & 0.954 & 0.058 & 0.955 & 0.058 & 0.993 & 0.078 & 0.982 & 0.069 \\
0.91 & 0.954 & 0.058 & 0.955 & 0.058 & 0.993 & 0.078 & 0.984 & 0.070 \\
0.98 & 0.953 & 0.058 & 0.953 & 0.058 & 0.997 & 0.083 & 0.987 & 0.074 \\
\rowcolor{orange} 1.00 & 0.954 & 0.058 & 0.953 & 0.058 & 0.998 & 0.084 & 0.989 & 0.075 \\
1.05 & 0.955 & 0.058 & 0.952 & 0.058 & 0.999 & 0.088 & 0.992 & 0.078 \\
1.10 & 0.955 & 0.058 & 0.952 & 0.058 & 1.000 & 0.091 & 0.994 & 0.081 \\
1.20 & 0.955 & 0.058 & 0.953 & 0.058 & 1.000 & 0.098 & 0.999 & 0.088 \\
1.30 & 0.953 & 0.059 & 0.954 & 0.058 & 1.000 & 0.105 & 1.000 & 0.095 \\
1.40 & 0.955 & 0.059 & 0.954 & 0.058 & 1.000 & 0.112 & 1.000 & 0.102 \\
1.50 & 0.957 & 0.060 & 0.954 & 0.058 & 1.000 & 0.119 & 1.000 & 0.108 \\
\bottomrule
\end{tabular}
}}
    \footnotesize\hspace{.5in}\vspace{0in}\\
	\raggedright Note: The table shows coverage probability and average length of confidence intervals over $2{,}000$ simulation replications. The sample size is $n=2{,}000$ and the number of bootstrap draws is $2{,}000$. The rows highlighted in yellow and orange correspond to the estimated MSE-optimal bandwidths for the point estimators with $L=0$ and $L=1$, respectively.
\end{table}

\begin{table}[hbtp]\renewcommand{\arraystretch}{1.2}
	\caption{Bootstrap 95\% Confidence Intervals for Partially Linear Regression Model: DGP 2 ($d=2$).}
	\label{Table: plr d=2}
	{\resizebox{\columnwidth}{!}{\begin{tabular}{r cccc cccc}
\toprule
& \multicolumn{4}{c}{$B = 1$} & \multicolumn{4}{c}{$B = 3^{1/d}$} \\
\cmidrule(lr){2-5}\cmidrule(lr){6-9}
& \multicolumn{2}{c}{$L = 0$} & \multicolumn{2}{c}{$L = 1$} & \multicolumn{2}{c}{$L = 0$} & \multicolumn{2}{c}{$L = 1$} \\
\cmidrule(lr){2-3}\cmidrule(lr){4-5}\cmidrule(lr){6-7}\cmidrule(lr){8-9}
$h$ & Coverage & Length & Coverage & Length & Coverage & Length & Coverage & Length \\
\midrule
0.12 & 0.991 & 0.149 & 0.994 & 0.171 & 0.949 & 0.111 & 0.949 & 0.122 \\
0.15 & 0.987 & 0.133 & 0.992 & 0.151 & 0.950 & 0.104 & 0.950 & 0.112 \\
0.17 & 0.984 & 0.123 & 0.990 & 0.137 & 0.951 & 0.100 & 0.955 & 0.106 \\
0.20 & 0.979 & 0.115 & 0.986 & 0.127 & 0.946 & 0.097 & 0.957 & 0.102 \\
0.22 & 0.977 & 0.110 & 0.984 & 0.120 & 0.944 & 0.095 & 0.955 & 0.099 \\
\rowcolor{yellow}  0.25 & 0.970 & 0.106 & 0.983 & 0.114 & 0.944 & 0.093 & 0.956 & 0.097 \\
0.28 & 0.965 & 0.102 & 0.979 & 0.110 & 0.939 & 0.092 & 0.956 & 0.095 \\
0.30 & 0.961 & 0.100 & 0.978 & 0.107 & 0.931 & 0.091 & 0.956 & 0.094 \\
0.32 & 0.952 & 0.098 & 0.978 & 0.104 & 0.923 & 0.090 & 0.956 & 0.093 \\
0.35 & 0.941 & 0.096 & 0.972 & 0.102 & 0.912 & 0.090 & 0.956 & 0.092 \\
0.38 & 0.925 & 0.095 & 0.973 & 0.100 & 0.905 & 0.089 & 0.956 & 0.091 \\
0.45 & 0.887 & 0.092 & 0.973 & 0.096 & 0.872 & 0.089 & 0.957 & 0.090 \\
0.52 & 0.829 & 0.091 & 0.968 & 0.094 & 0.821 & 0.089 & 0.958 & 0.089 \\
0.60 & 0.743 & 0.090 & 0.966 & 0.092 & 0.749 & 0.090 & 0.959 & 0.089 \\
0.68 & 0.625 & 0.089 & 0.964 & 0.091 & 0.646 & 0.091 & 0.958 & 0.088 \\
\rowcolor{orange}  0.75 & 0.468 & 0.089 & 0.965 & 0.090 & 0.514 & 0.093 & 0.957 & 0.089 \\
0.83 & 0.298 & 0.089 & 0.961 & 0.090 & 0.375 & 0.097 & 0.958 & 0.089 \\
0.90 & 0.151 & 0.089 & 0.952 & 0.089 & 0.239 & 0.100 & 0.956 & 0.090 \\
0.98 & 0.071 & 0.089 & 0.942 & 0.089 & 0.124 & 0.105 & 0.954 & 0.092 \\
1.05 & 0.022 & 0.090 & 0.935 & 0.089 & 0.067 & 0.110 & 0.957 & 0.094 \\
1.12 & 0.007 & 0.090 & 0.924 & 0.088 & 0.026 & 0.116 & 0.957 & 0.098 \\
\bottomrule
\end{tabular}
}}
    \footnotesize\hspace{.5in}\\
	\raggedright Note: The table shows coverage probability and average length of confidence intervals over $2{,}000$ simulation replications. The sample size is $n=2{,}000$ and the number of bootstrap draws is $2{,}000$. The rows highlighted in yellow and orange correspond to the estimated MSE-optimal bandwidths for the point estimators with $L=0$ and $L=1$, respectively.
\end{table}

\begin{table}[hbtp]\renewcommand{\arraystretch}{1.2}
	\caption{Bootstrap 95\% Confidence Intervals for Partially Linear Regression Model: DGP 3 ($d=3$).}
	\label{Table: plr d=3}
	{\resizebox{\columnwidth}{!}{\begin{tabular}{r cccc cccc}
\toprule
& \multicolumn{4}{c}{$B = 1$} & \multicolumn{4}{c}{$B = 3^{1/d}$} \\
\cmidrule(lr){2-5}\cmidrule(lr){6-9}
& \multicolumn{2}{c}{$L = 0$} & \multicolumn{2}{c}{$L = 1$} & \multicolumn{2}{c}{$L = 0$} & \multicolumn{2}{c}{$L = 1$} \\
\cmidrule(lr){2-3}\cmidrule(lr){4-5}\cmidrule(lr){6-7}\cmidrule(lr){8-9}
$h$ & Coverage & Length & Coverage & Length & Coverage & Length & Coverage & Length \\
\midrule
0.20 & 0.998 & 0.325 & 0.999 & 0.402 & 0.950 & 0.210 & 0.950 & 0.254 \\
0.24 & 0.997 & 0.259 & 0.998 & 0.318 & 0.956 & 0.174 & 0.952 & 0.207 \\
0.28 & 0.993 & 0.217 & 0.996 & 0.263 & 0.957 & 0.152 & 0.961 & 0.177 \\
0.32 & 0.991 & 0.188 & 0.995 & 0.225 & 0.951 & 0.138 & 0.961 & 0.158 \\
0.36 & 0.986 & 0.168 & 0.994 & 0.198 & 0.949 & 0.129 & 0.964 & 0.145 \\
\rowcolor{yellow}  0.40 & 0.979 & 0.154 & 0.994 & 0.179 & 0.938 & 0.122 & 0.963 & 0.135 \\
0.44 & 0.969 & 0.143 & 0.993 & 0.164 & 0.917 & 0.117 & 0.961 & 0.128 \\
0.48 & 0.950 & 0.135 & 0.993 & 0.153 & 0.887 & 0.114 & 0.962 & 0.123 \\
0.50 & 0.934 & 0.131 & 0.992 & 0.149 & 0.873 & 0.113 & 0.966 & 0.121 \\
0.52 & 0.916 & 0.129 & 0.991 & 0.145 & 0.858 & 0.111 & 0.966 & 0.119 \\
0.56 & 0.879 & 0.124 & 0.990 & 0.138 & 0.827 & 0.110 & 0.963 & 0.116 \\
0.60 & 0.830 & 0.120 & 0.988 & 0.132 & 0.778 & 0.109 & 0.961 & 0.114 \\
0.60 & 0.830 & 0.120 & 0.988 & 0.132 & 0.778 & 0.109 & 0.961 & 0.114 \\
0.70 & 0.650 & 0.114 & 0.979 & 0.123 & 0.613 & 0.107 & 0.959 & 0.110 \\
0.80 & 0.408 & 0.110 & 0.975 & 0.117 & 0.399 & 0.108 & 0.958 & 0.107 \\
0.90 & 0.178 & 0.108 & 0.971 & 0.113 & 0.202 & 0.111 & 0.962 & 0.106 \\
\rowcolor{orange}  1.00 & 0.051 & 0.107 & 0.970 & 0.110 & 0.075 & 0.116 & 0.964 & 0.106 \\
1.10 & 0.009 & 0.108 & 0.967 & 0.108 & 0.019 & 0.121 & 0.967 & 0.108 \\
1.20 & 0.000 & 0.109 & 0.960 & 0.107 & 0.002 & 0.128 & 0.969 & 0.110 \\
1.30 & 0.000 & 0.111 & 0.950 & 0.106 & 0.000 & 0.137 & 0.970 & 0.114 \\
1.40 & 0.000 & 0.114 & 0.925 & 0.106 & 0.000 & 0.146 & 0.967 & 0.120 \\
1.50 & 0.000 & 0.118 & 0.887 & 0.107 & 0.000 & 0.156 & 0.955 & 0.126 \\
\bottomrule
\end{tabular}
}}
    \footnotesize\hspace{.5in}\\
	\raggedright Note: The table shows coverage probability and average length of confidence intervals over $2{,}000$ simulation replications. The sample size is $n=2{,}000$ and the number of bootstrap draws is $2{,}000$. The rows highlighted in yellow and orange correspond to the estimated MSE-optimal bandwidths for the point estimators with $L=0$ and $L=1$, respectively.
\end{table}

\begin{table}[hbtp]\renewcommand{\arraystretch}{1.2}
	\caption{Bootstrap 95\% Confidence Intervals for Partially Linear Logit Model: DGP 1 ($d=1$).}
	\label{Table: logit d=1}
	{\resizebox{\columnwidth}{!}{\begin{tabular}{r cccc cccc}
\toprule
& \multicolumn{4}{c}{$B = 1$} & \multicolumn{4}{c}{$B = 3^{1/d}$} \\
\cmidrule(lr){2-5}\cmidrule(lr){6-9}
& \multicolumn{2}{c}{$L = 0$} & \multicolumn{2}{c}{$L = 1$} & \multicolumn{2}{c}{$L = 0$} & \multicolumn{2}{c}{$L = 1$} \\
\cmidrule(lr){2-3}\cmidrule(lr){4-5}\cmidrule(lr){6-7}\cmidrule(lr){8-9}
$h$ & Coverage & Length & Coverage & Length & Coverage & Length & Coverage & Length \\
\midrule
0.10 & 0.952 & 0.319 & 0.953 & 0.324 & 0.944 & 0.308 & 0.947 & 0.311 \\
0.12 & 0.950 & 0.317 & 0.951 & 0.321 & 0.942 & 0.306 & 0.946 & 0.310 \\
0.14 & 0.950 & 0.315 & 0.952 & 0.318 & 0.942 & 0.305 & 0.946 & 0.309 \\
0.16 & 0.951 & 0.314 & 0.953 & 0.317 & 0.944 & 0.304 & 0.945 & 0.308 \\
0.18 & 0.950 & 0.312 & 0.952 & 0.315 & 0.943 & 0.302 & 0.948 & 0.307 \\
0.18 & 0.950 & 0.312 & 0.952 & 0.315 & 0.943 & 0.302 & 0.948 & 0.307 \\
\rowcolor{yellow}  0.20 & 0.950 & 0.311 & 0.951 & 0.314 & 0.941 & 0.301 & 0.947 & 0.306 \\
0.22 & 0.950 & 0.311 & 0.951 & 0.314 & 0.941 & 0.300 & 0.946 & 0.306 \\
0.22 & 0.949 & 0.311 & 0.951 & 0.313 & 0.941 & 0.300 & 0.946 & 0.306 \\
0.24 & 0.948 & 0.310 & 0.950 & 0.313 & 0.940 & 0.299 & 0.945 & 0.305 \\
0.25 & 0.947 & 0.310 & 0.950 & 0.312 & 0.939 & 0.298 & 0.945 & 0.304 \\
0.26 & 0.948 & 0.309 & 0.950 & 0.312 & 0.938 & 0.297 & 0.945 & 0.304 \\
0.28 & 0.948 & 0.309 & 0.949 & 0.312 & 0.935 & 0.296 & 0.943 & 0.303 \\
0.29 & 0.947 & 0.308 & 0.950 & 0.311 & 0.934 & 0.295 & 0.942 & 0.302 \\
0.30 & 0.947 & 0.308 & 0.950 & 0.311 & 0.933 & 0.295 & 0.942 & 0.302 \\
0.32 & 0.946 & 0.307 & 0.949 & 0.311 & 0.933 & 0.293 & 0.941 & 0.300 \\
\rowcolor{orange}  0.36 & 0.942 & 0.306 & 0.948 & 0.310 & 0.931 & 0.291 & 0.940 & 0.298 \\
0.40 & 0.941 & 0.306 & 0.948 & 0.309 & 0.930 & 0.289 & 0.936 & 0.296 \\
0.43 & 0.938 & 0.305 & 0.947 & 0.309 & 0.926 & 0.287 & 0.935 & 0.295 \\
0.47 & 0.939 & 0.304 & 0.948 & 0.308 & 0.922 & 0.285 & 0.935 & 0.293 \\
0.50 & 0.936 & 0.303 & 0.946 & 0.308 & 0.919 & 0.283 & 0.932 & 0.291 \\
0.54 & 0.932 & 0.302 & 0.947 & 0.307 & 0.916 & 0.281 & 0.930 & 0.289 \\
\bottomrule
\end{tabular}}}
    \footnotesize\hspace{.5in}\\
	\raggedright Note: The table shows coverage probability and average length of confidence intervals over $2{,}000$ simulation replications. The sample size is $n=2{,}000$ and the number of bootstrap draws is $2{,}000$. The rows highlighted in yellow and orange correspond to the estimated MSE-optimal bandwidths for the point estimators with $L=0$ and $L=1$, respectively.
\end{table}

\begin{table}[hbtp]
    \renewcommand{\arraystretch}{1.2}
	\caption{Bootstrap 95\% Confidence Intervals for Partially Linear Logit Model: DGP 2 ($d=2$).}
	\label{Table: logit d=2}
	{\resizebox{\columnwidth}{!}{\begin{tabular}{r cccc cccc}
\toprule
& \multicolumn{4}{c}{$B = 1$} & \multicolumn{4}{c}{$B = 3^{1/d}$} \\
\cmidrule(lr){2-5}\cmidrule(lr){6-9}
& \multicolumn{2}{c}{$L = 0$} & \multicolumn{2}{c}{$L = 1$} & \multicolumn{2}{c}{$L = 0$} & \multicolumn{2}{c}{$L = 1$} \\
\cmidrule(lr){2-3}\cmidrule(lr){4-5}\cmidrule(lr){6-7}\cmidrule(lr){8-9}
$h$ & Coverage & Length & Coverage & Length & Coverage & Length & Coverage & Length \\
\midrule
0.14 & 0.982 & 0.507 & 0.985 & 0.568 & 0.952 & 0.404 & 0.952 & 0.431 \\
0.17 & 0.980 & 0.462 & 0.982 & 0.510 & 0.953 & 0.385 & 0.953 & 0.407 \\
0.20 & 0.977 & 0.433 & 0.980 & 0.471 & 0.953 & 0.372 & 0.954 & 0.390 \\
0.22 & 0.974 & 0.414 & 0.978 & 0.445 & 0.951 & 0.364 & 0.951 & 0.380 \\
0.22 & 0.973 & 0.413 & 0.978 & 0.444 & 0.951 & 0.364 & 0.951 & 0.379 \\
0.25 & 0.971 & 0.399 & 0.977 & 0.426 & 0.954 & 0.357 & 0.953 & 0.372 \\
0.27 & 0.969 & 0.392 & 0.975 & 0.416 & 0.950 & 0.354 & 0.953 & 0.367 \\
\rowcolor{yellow}  0.28 & 0.969 & 0.388 & 0.973 & 0.411 & 0.950 & 0.352 & 0.953 & 0.365 \\
0.31 & 0.964 & 0.380 & 0.972 & 0.400 & 0.948 & 0.348 & 0.950 & 0.360 \\
0.32 & 0.963 & 0.378 & 0.970 & 0.398 & 0.948 & 0.347 & 0.951 & 0.359 \\
0.34 & 0.961 & 0.373 & 0.969 & 0.391 & 0.945 & 0.344 & 0.951 & 0.356 \\
0.36 & 0.958 & 0.368 & 0.968 & 0.385 & 0.941 & 0.341 & 0.951 & 0.353 \\
0.36 & 0.958 & 0.368 & 0.967 & 0.385 & 0.940 & 0.341 & 0.951 & 0.353 \\
0.39 & 0.954 & 0.363 & 0.964 & 0.379 & 0.936 & 0.338 & 0.949 & 0.350 \\
0.40 & 0.950 & 0.361 & 0.964 & 0.376 & 0.933 & 0.336 & 0.950 & 0.348 \\
0.42 & 0.946 & 0.359 & 0.964 & 0.374 & 0.928 & 0.335 & 0.951 & 0.347 \\
\rowcolor{orange} 0.45 & 0.938 & 0.356 & 0.963 & 0.370 & 0.919 & 0.332 & 0.948 & 0.344 \\
0.50 & 0.928 & 0.351 & 0.960 & 0.364 & 0.909 & 0.329 & 0.947 & 0.340 \\
0.54 & 0.915 & 0.347 & 0.958 & 0.360 & 0.893 & 0.326 & 0.940 & 0.337 \\
0.59 & 0.898 & 0.344 & 0.953 & 0.356 & 0.876 & 0.323 & 0.938 & 0.334 \\
0.63 & 0.881 & 0.341 & 0.947 & 0.353 & 0.855 & 0.320 & 0.930 & 0.331 \\
0.68 & 0.858 & 0.338 & 0.937 & 0.350 & 0.832 & 0.318 & 0.921 & 0.329 \\
\bottomrule
\end{tabular}

}}
	  \footnotesize\hspace{.5in}\\
    \raggedright Note: The table shows coverage probability and average length of confidence intervals over $2{,}000$ simulation replications. The sample size is $n=2{,}000$ and the number of bootstrap draws is $2{,}000$. The rows highlighted in yellow and orange correspond to the estimated MSE-optimal bandwidths for the point estimators with $L=0$ and $L=1$, respectively.
\end{table}

\begin{table}[hbtp]
    \renewcommand{\arraystretch}{1.2}
	\caption{Bootstrap 95\% Confidence Intervals for Partially Linear Logit Model: DGP 3 ($d=3$).}
	\label{Table: logit d=3}
	{\resizebox{\columnwidth}{!}{\begin{tabular}{r cccc cccc}
\toprule
& \multicolumn{4}{c}{$B = 1$} & \multicolumn{4}{c}{$B = 3^{1/d}$} \\
\cmidrule(lr){2-5}\cmidrule(lr){6-9}
& \multicolumn{2}{c}{$L = 0$} & \multicolumn{2}{c}{$L = 1$} & \multicolumn{2}{c}{$L = 0$} & \multicolumn{2}{c}{$L = 1$} \\
\cmidrule(lr){2-3}\cmidrule(lr){4-5}\cmidrule(lr){6-7}\cmidrule(lr){8-9}
$h$ & Coverage & Length & Coverage & Length & Coverage & Length & Coverage & Length \\
\midrule
0.20 & 0.996 & 1.110 & 0.995 & 1.370 & 0.953 & 0.693 & 0.944 & 0.826 \\
0.23 & 0.994 & 0.863 & 0.994 & 1.049 & 0.951 & 0.578 & 0.949 & 0.673 \\
0.25 & 0.995 & 0.794 & 0.994 & 0.960 & 0.952 & 0.547 & 0.948 & 0.630 \\
0.27 & 0.993 & 0.717 & 0.994 & 0.857 & 0.952 & 0.511 & 0.947 & 0.581 \\
0.30 & 0.991 & 0.647 & 0.994 & 0.765 & 0.952 & 0.479 & 0.951 & 0.538 \\
0.31 & 0.991 & 0.622 & 0.993 & 0.732 & 0.952 & 0.467 & 0.955 & 0.522 \\
0.35 & 0.987 & 0.560 & 0.989 & 0.648 & 0.943 & 0.439 & 0.956 & 0.483 \\
0.35 & 0.987 & 0.559 & 0.989 & 0.646 & 0.943 & 0.438 & 0.956 & 0.482 \\
\rowcolor{yellow} 0.39 & 0.980 & 0.514 & 0.984 & 0.586 & 0.933 & 0.418 & 0.952 & 0.454 \\
0.40 & 0.979 & 0.505 & 0.983 & 0.573 & 0.929 & 0.414 & 0.949 & 0.448 \\
0.43 & 0.971 & 0.482 & 0.983 & 0.541 & 0.926 & 0.403 & 0.951 & 0.434 \\
0.45 & 0.963 & 0.467 & 0.980 & 0.522 & 0.914 & 0.397 & 0.947 & 0.425 \\
0.47 & 0.957 & 0.457 & 0.979 & 0.507 & 0.907 & 0.392 & 0.948 & 0.418 \\
\rowcolor{orange} 0.50 & 0.944 & 0.441 & 0.976 & 0.486 & 0.891 & 0.384 & 0.945 & 0.408 \\
0.51 & 0.939 & 0.438 & 0.976 & 0.482 & 0.887 & 0.383 & 0.942 & 0.406 \\
0.55 & 0.917 & 0.423 & 0.972 & 0.461 & 0.859 & 0.375 & 0.939 & 0.397 \\
0.55 & 0.914 & 0.422 & 0.971 & 0.460 & 0.856 & 0.375 & 0.937 & 0.396 \\
0.58 & 0.891 & 0.411 & 0.964 & 0.445 & 0.832 & 0.369 & 0.928 & 0.389 \\
0.60 & 0.875 & 0.407 & 0.960 & 0.440 & 0.814 & 0.367 & 0.927 & 0.387 \\
0.65 & 0.829 & 0.396 & 0.945 & 0.425 & 0.774 & 0.362 & 0.908 & 0.379 \\
0.70 & 0.774 & 0.387 & 0.926 & 0.413 & 0.727 & 0.357 & 0.888 & 0.373 \\
0.75 & 0.720 & 0.380 & 0.902 & 0.403 & 0.663 & 0.353 & 0.855 & 0.368 \\
\bottomrule
\end{tabular}}}
    \footnotesize\hspace{.5in}\\
    \raggedright Note: The table shows coverage probability and average length of confidence intervals over $2{,}000$ simulation replications. The sample size is $n=2{,}000$ and the number of bootstrap draws is $2{,}000$. The rows highlighted in yellow and orange correspond to the estimated MSE-optimal bandwidths for the point estimators with $L=0$ and $L=1$, respectively.
\end{table}

\begin{figure}[htbp]
  \centering
\caption{Coverage probabilities as a function of bandwidth $h$.}
  \label{fig:coverage}

  \begin{subfigure}[t]{0.48\textwidth}
    \includegraphics[width=\linewidth]{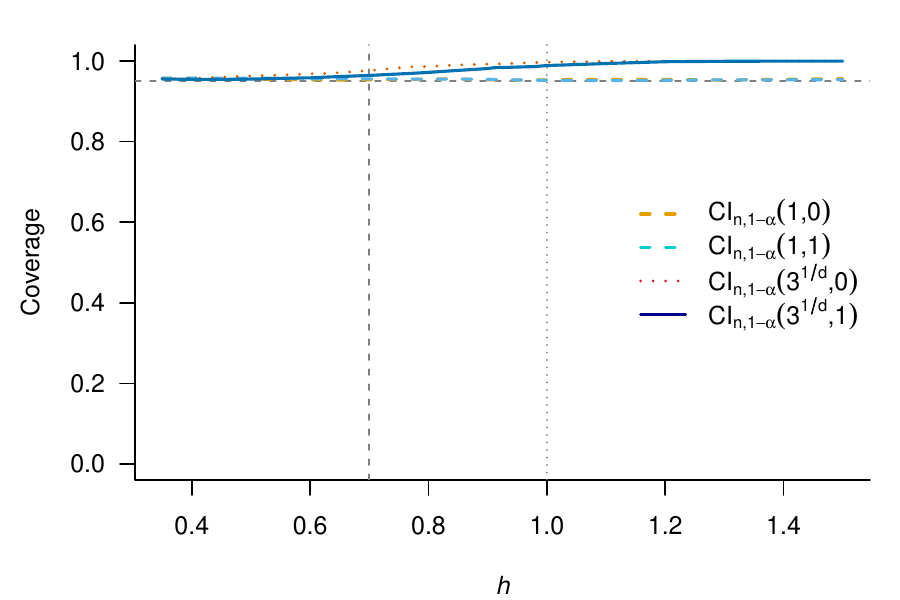}
    \caption{PLR ($d=1$).}
  \end{subfigure}
  \hfill
  \begin{subfigure}[t]{0.48\textwidth}
    \includegraphics[width=\linewidth]{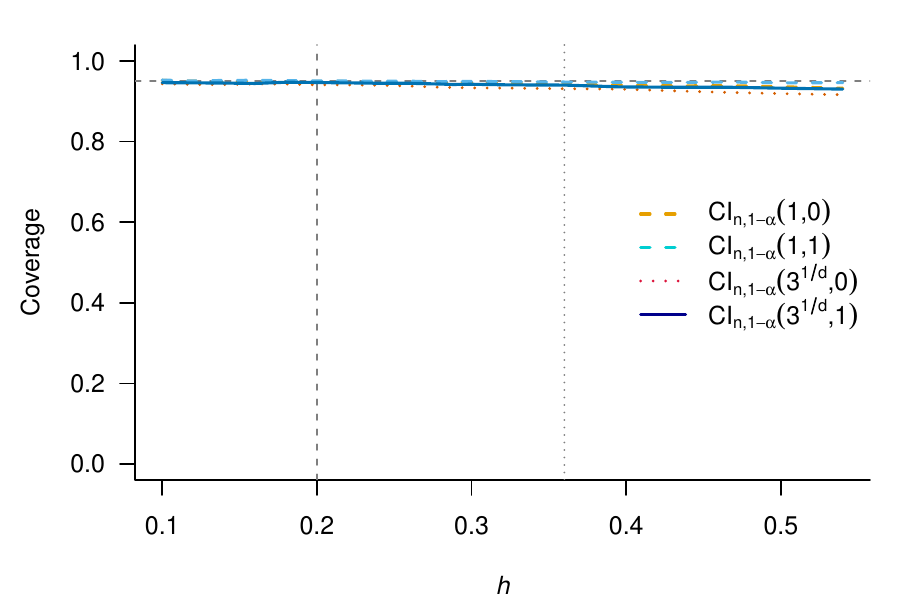}
    \caption{PLL ($d=1$).}
  \end{subfigure}

  \medskip

  \begin{subfigure}[t]{0.48\textwidth}
    \includegraphics[width=\linewidth]{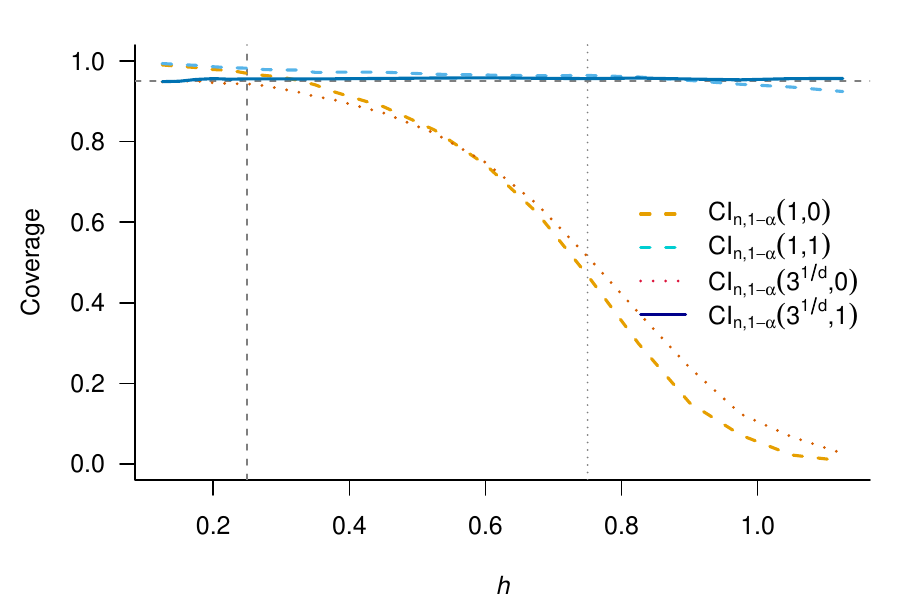}
    \caption{PLR ($d=2$).}
  \end{subfigure}
  \hfill
  \begin{subfigure}[t]{0.48\textwidth}
    \includegraphics[width=\linewidth]{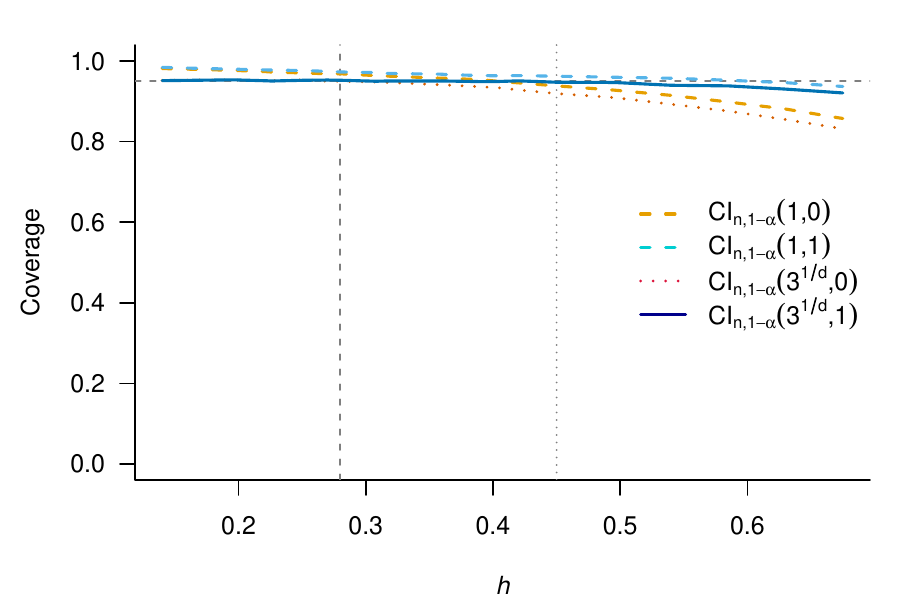}
    \caption{PLL ($d=2$).}
  \end{subfigure}

  \medskip

  \begin{subfigure}[t]{0.48\textwidth}
    \includegraphics[width=\linewidth]{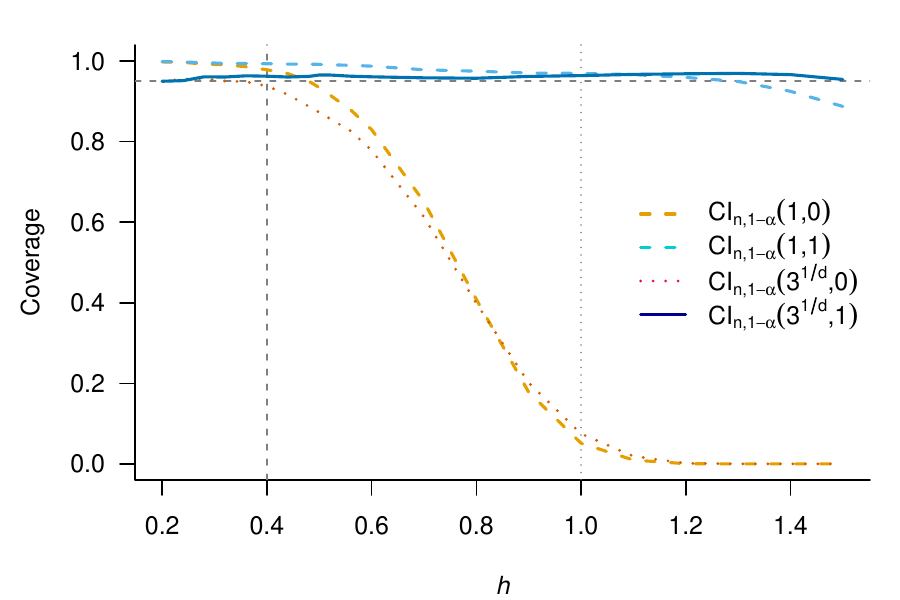}
    \caption{PLR ($d=3$).}
  \end{subfigure}
  \hfill
  \begin{subfigure}[t]{0.48\textwidth}
    \includegraphics[width=\linewidth]{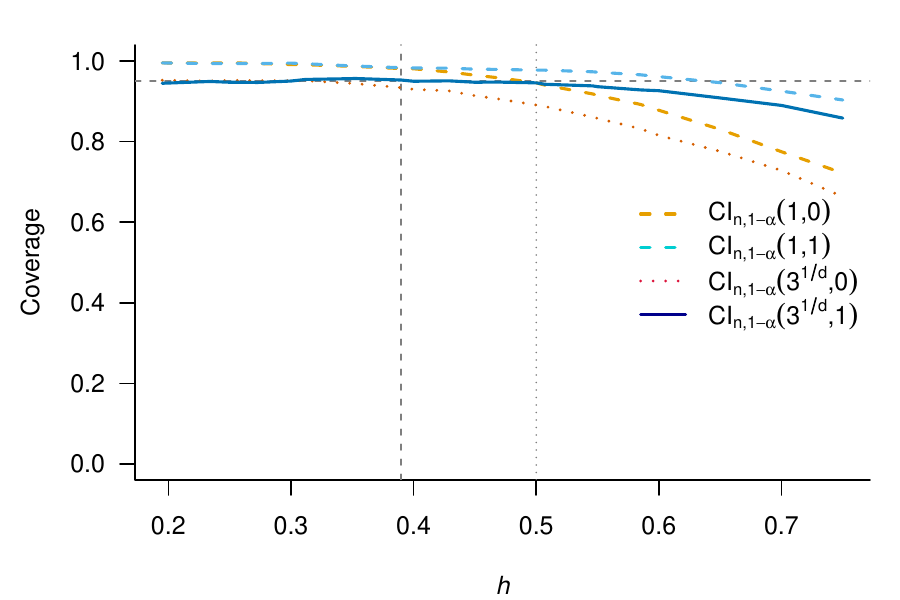}
    \caption{PLL ($d=3$).}
  \end{subfigure}
  \footnotesize\hspace{.5in}\\
  \raggedright Note: Each panel plots the empirical coverage probability of $\mathsf{CI}_{n,0.95}^*(B,L)$ against bandwidth $h$ for the four procedures $(B,L)\in\{(1,0),(1,1),(3^{1/d},0),(3^{1/d},1)\}$. The horizontal dotted line marks the $95\%$ nominal level. The two vertical lines mark the estimated MSE-optimal bandwidths $h_{L=0}$ and $h_{L=1}$ for the point estimators $\widehat{\btheta}_n$ ($L=0$) and $\widetilde{\btheta}_n$ ($L=1$), respectively. The sample size is $n=2{,}000$ and results are based on $2{,}000$ simulation replications with $2{,}000$ bootstrap samples.
\end{figure}

\renewcommand{\arraystretch}{1.2}

\bibliography{CJN_2026_ET--bib}
\bibliographystyle{ecta}

\end{document}